\newcommand{\bx}{{\bf x}}
\newcommand{\by}{{\bf y}}
\newcommand{\bb}{{\bf b}}
\newcommand{\ba}{{\bf a}}
\newcommand{\B}{\{0,1\}}
\newcommand{\FF}{\mathbb{F}}
\newcommand{\F}{\mathbb{F}_p}
\newcommand{\bfone}{{\boldsymbol 1}}
\newcommand{\bfx}{{\boldsymbol x}}
\newcommand{\bfy}{{\boldsymbol y}}
\newcommand{\bfz}{{\boldsymbol z}}
\renewcommand{\Pr}{{\bf Pr}}
\newtheorem{theorem}{Theorem}
\newtheorem{lemma}[theorem]{Lemma}
\newtheorem{dfn}[theorem]{Definition}
\newtheorem{corollary}[theorem]{Corollary}
\renewcommand{\Pr}{{\bf Pr}}
\newcommand{\ignore}[1]{{ }}
\newcommand{\simpath}[2]{$#1$-SIMPLE $#2$-PATH}
\newcommand{\sipath}[1]{$#1$-SIMPLE PATH}
\newcommand{\spath}[1]{SIMPLE $#1$-PATH}
\newcommand{\hampath}{HAMILTONIAN-PATH}
\newcommand{\poly}{\mathrm{poly}}
\title{\bf {On $r$-Simple $k$-Path}
 }
\author{Hasan Abasi\hspace{.3in}  Nader H. Bshouty\hspace{.3in}
Ariel Gabizon\thanks{The research leading to these results has received funding from the European Community's Seventh Framework Programme (FP7/2007-2013) under grant agreement number 257575.}\hspace{.3in}  Elad
Haramaty\thanks{This research was partially supported by the Israel Science Foundation (grant number 339/10)}\\
Department of Computer Science\\
 Technion, Haifa}
\begin{document}
\maketitle
\begin{abstract}
An $r$-simple $k$-path is a {path} in the graph of length $k$ that
passes through each vertex at most $r$ times. The \simpath{r}{k}
problem, given a graph $G$ as input, asks whether there exists an
$r$-simple $k$-path in $G$. We first show that this problem is
NP-Complete. We then show that there is a graph $G$ that contains
an $r$-simple $k$-path and no simple path of length greater than
$4\log k/\log r$. So this, in a sense, motivates this problem
especially when one's goal is to find a short path that visits many
vertices in the graph while bounding the number of visits at each
vertex.

We then give a randomized algorithm that runs in
time
$$\poly(n)\cdot 2^{O( k\cdot \log r/r)}$$ that solves the \simpath{r}{k}
on a graph with $n$ vertices with one-sided error. We also show
that a randomized algorithm with running time $\poly(n)\cdot
2^{(c/2)k/ r}$ with $c<1$ gives a randomized
algorithm with running time $\poly(n)\cdot 2^{cn}$ for the
Hamiltonian path problem in a directed graph - an outstanding open problem.
So in a sense our algorithm is optimal up to an $O(\log r)$
factor.
\end{abstract}

\section{Introduction}

Let $G$ be a directed graph on $n$ vertices. A path $\rho$ is
called {\it simple} if all the vertices in the path are distinct.
The \spath{k} problem, given $G$ as input, asks whether there
exists a simple path in $G$ of length $k$. This is a
generalization of the well known \hampath\; problem that asks
whether there is a simple path passing through \emph{all}
vertices, i.e., a simple path of length $n$ in $G$.  As \hampath\;
is NP-complete, we do not expect to find polynomial time
algorithms for \spath{k} for general $k$. Moreover, we do not even
expect to find good approximation algorithms for the corresponding
optimization problem: the \emph{longest path problem}, where we
ask what is the length of the longest simple path in $G$. This is
because Bj\"orklund et al. \cite{BHK04} showed that the longest
path problem cannot be approximated in polynomial time to within a
multiplicative factor of $n^{1-\epsilon}$, for any constant
$\epsilon>0$, unless P=NP. This motivated finding algorithms for
\spath{k} with running time whose dependence on $k$ is as small as
possible. The first result in this venue by  Monien \cite{M85}
achieved a running time of $k!\cdot \poly(n)$. Since then, there
has been extensive research on constructing algorithms for
\spath{k} running in time $f(k)\cdot\poly(n)$, for a function
$f(k)$ as small as possible~\cite{B93,AYZ95,KMRR06,CLSZ07,K08}.
The current state of the art is $2^k\cdot poly(n)$ by Williams
\cite{W09} for directed graphs and $O(1.657^k)\cdot \poly(n)$ by
Bj{\"o}rklund \cite{BHKK10} for undirected graphs.

\subsection{Our results}

 In this paper we look at a
further generalization of \spath{k} which we call \simpath{r}{k}.
In this problem instead of insisting on $\rho$ being a simple path, we allow $\rho$
to visit any vertex a fixed number of times.
We now formally define the problem \simpath{r}{k}.
\begin{dfn}\label{dfn:r_simp}
Fix integers $r\leq k$.
Let $G$ be a directed graph.
\begin{itemize}
\item We say a path $\rho$ in $G$ is \emph{$r$-simple}, if each
vertex of $G$ appears in $\rho$ at most $r$ times.  Obviously, $\rho$
is a simple path if and only if it is a $1$-simple path.
\item The \emph{\simpath{r}{k}} problem, given $G$ as input,
asks whether there exists an $r$-simple path in $G$ of length $k$.
\end{itemize}
\end{dfn}

At first, one may wonder whether for some fixed $r>1$, \simpath{r}{k}
always has a polynomial time algorithm.
We show this is unlikely by showing that for any $r$, for some $k$ \simpath{r}{k} is NP-complete.
See Theorem \ref{thm:simpathNPC} in Section \ref{sec:prelim_results} for a formal statement and proof of this.
Thus, as in the case of \spath{k}, one may ask what is the best dependency of the running time on $r$ and $k$ that can be
obtained in an algorithm for \simpath{r}{k}.

Our main result is
\begin{theorem}\label{thm:main}
Fix any integers $r,k$ with $2\leq r\leq k$. There is a randomized
algorithm running in time $$\poly(n)\cdot O\left(r^{ \frac{2k}{r}
+O(1)}\right) = \poly(n)\cdot 2^{O( k\cdot \log r/r)}$$ solving
\simpath{r}{k}
 on a graph with $n$ vertices  with
 one-sided error.
\end{theorem}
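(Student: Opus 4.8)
The plan is to solve \simpath{r}{k} by a color‑coding argument that uses only about $q\approx k/r$ colors, exploiting the fact that any $r$‑simple path of length $k$ visits at least $\lceil k/r\rceil$ distinct vertices. Fix a uniformly random coloring $c:V(G)\to[q]$ with $q=\Theta(k/r)$, and call a walk $\rho$ in $G$ \emph{$c$-balanced} if for every color $i\in[q]$ the number of occurrences in $\rho$ of vertices colored $i$ is at most $r$. A $c$-balanced walk is automatically $r$-simple, since each vertex inherits the bound on its color class, so it is enough to (i) decide efficiently whether $G$ has a $c$-balanced walk of length $k$, and (ii) show that whenever $G$ has some $r$-simple path of length $k$, a random $c$ makes that path $c$-balanced with non‑negligible probability.

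For (i) I would run dynamic programming over states $(v,\vec a)$, where $v$ is the current endpoint and $\vec a\in\{0,1,\dots,r\}^{q}$ records how many times each color has been used so far; the current length is $\sum_i a_i$ and need not be stored separately. Transitions follow edges of $G$, incrementing the coordinate of the new vertex's color and discarding a transition that would push that coordinate above $r$. Filling the table in order of increasing $\sum_i a_i$ takes time $\poly(n)\cdot(r+1)^{q}=\poly(n)\cdot 2^{O((k/r)\log r)}$, and $G$ has a $c$-balanced walk of length $k$ iff some state with $\sum_i a_i=k$ is reachable.

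The main obstacle is (ii): the probabilistic lemma that a random $q$-coloring balances a fixed target $r$-simple $k$-path with probability at least $2^{-O(k/r)}$ --- even $2^{-O((k/r)\log r)}$ would already suffice. Fix such a path with distinct vertex set $U$ and visit-multiplicities $m_v\le r$, $\sum_{v\in U}m_v=k$; one must lower‑bound the probability that $\sum_{v\in U:\,c(v)=i}m_v\le r$ for all $i$. This is a balls‑into‑bins event with ``balls'' of sizes $m_v\le r$ and ``bins'' of capacity $r$: taking $q=\Theta(k/r)$ large enough first guarantees that a valid coloring exists at all (a first‑fit packing of items of size $\le r$ into bins of capacity $r$ uses $O(k/r)$ bins), and then a sequential revelation of the colors of $U$ --- exposing the large‑multiplicity vertices first and, at each step, bounding the chance that the chosen bin overflows --- yields the bound. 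A plain union bound over the $q$ colors is too weak here, since each color overflows with constant probability, so this conditioning/exposure estimate is the crux.

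Finally I would amplify by running the procedure with $T=2^{O(k/r)}$ (or, more crudely, $2^{O((k/r)\log r)}$) independent colorings, accepting iff some run accepts. If $G$ has no $r$-simple $k$-path then no run accepts, since $c$-balanced implies $r$-simple --- hence no false positives. If $G$ has one, then by the lemma some run's coloring balances it and that run accepts, with probability at least a positive constant (boostable by increasing $T$). The total running time is $T\cdot\poly(n)\cdot(r+1)^{q}=\poly(n)\cdot 2^{O((k/r)\log r)}=\poly(n)\cdot 2^{O(k\log r/r)}$, as claimed.
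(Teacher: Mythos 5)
Your approach is genuinely different from the paper's, which is algebraic rather than combinatorial: the paper encodes length-$k$ walks as monomials of a polynomial $P_G$ computable from the adjacency matrix, picks a prime $p$ and an integer $l$ with $rl\le p-1<(r+1)l$ (so $p=O(r^2)$, $l\approx r$), substitutes $x_i\mapsto x_i^l$, and observes that the resulting homogeneous polynomial has $\deg_p$ equal to $kl$ iff $P_G$ contains an $r$-monomial iff $G$ has an $r$-simple $k$-path; a single Haramaty--Shpilka--Sudan low-degree test over $\F$ then costs $\poly(n)\cdot p^{\lceil kl/(p-1)\rceil+1}=\poly(n)\cdot r^{2k/r+O(1)}$. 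Your soundness direction ($c$-balanced implies $r$-simple, hence no false positives) and your DP for step (i) are fine. The problem is step (ii), which you correctly flag as the crux: the probabilistic lemma is not merely unproven, it is false, and no choice of $q$ repairs the scheme.

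Here is a concrete counterexample. Take a target $r$-simple $k$-path whose multiplicity profile has $k/(2r)$ \emph{heavy} vertices visited exactly $r$ times each and $k/2$ \emph{light} vertices visited once each (realizable, e.g., by a path that shuttles $r$ times between a hub and fresh leaves before moving on to the next hub). A color class containing a heavy vertex already carries weight $r$, so a $c$-balanced coloring must give every light vertex a color distinct from that of every heavy vertex, and must also rainbow-color the heavy vertices. Condition on any heavy coloring consistent with success: the heavy vertices then occupy $k/(2r)$ of the $q=Ck/r$ colors, a $\tfrac{1}{2C}$ fraction, and each light vertex independently avoids this set with probability $1-\tfrac{1}{2C}$. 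Hence the success probability is at most $\left(1-\tfrac{1}{2C}\right)^{k/2}=2^{-\Omega(k/C)}$. To reach even your weaker target of $2^{-O((k/r)\log r)}$ you would need $C=\Omega(r/\log r)$, i.e.\ $q=\Omega(k/\log r)$, but then your DP table has $(r+1)^q=2^{\Omega(k)}$ states; with $q=\Theta(k/r)$ the DP is fine but you need $2^{\Omega(k)}$ independent colorings. Either way the running time is $2^{\Omega(k)}$ rather than $2^{O(k\log r/r)}$, which for large $r$ (say $r=\sqrt{k}$) misses the theorem by an exponential margin. Note that the two homogeneous profiles do work out (all multiplicities $r$ reduces to standard rainbow coloring of $k/r$ vertices, probability $2^{-O(k/r)}$; all multiplicities $1$ gives $2^{-O((k/r)\log r)}$ by a multinomial computation), which is why the heuristic looks plausible; the failure is specific to mixed multiplicities, and it is exactly the regime the ``exposure'' argument you sketch cannot handle. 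Any fix must avoid enforcing the per-vertex budget $r$ at the coarser granularity of color classes --- this is precisely what the paper's field-theoretic reduction accomplishes.
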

One may ask how far from optimal is the dependency on $k$ and $r$
in Theorem \ref{thm:main}. Theorem \ref{thm:simpathNPC} implies
that a running time of $\poly(n)\cdot 2^{o(k/r)}$ would give an
algorithm with running time $2^{o(n)}$ for \hampath. Moreover,
even a running time of $\poly(n)\cdot 2^{c\cdot k/r}$, for a small
enough constant $c<1/2$, would imply a better algorithm for
\hampath\; than those of \cite{W09,BHKK10} which are the best
currently known. So, in a sense our algorithm is optimal up to an
$O(\log r)$ factor. We find closing this $O(\log r)$ gap, e.g. by
a better reduction to \hampath, or a better algorithm for
\simpath{r}{k}, to be an interesting open problem.

\subsection{Finding a path with many distinct vertices}
We give more motivation for the \simpath{r}{k} problem. Suppose we
are in a situation where we wish to find a relatively short path
passing through many distinct vertices. Note that an $r$-simple
path of length $k$ must path through at least $k/r$ distinct
vertices. Thus, in case, for example, a $2$-simple path of length
$k$ exists, our algorithm for \simpath{2}{k} can be used to find a
path of length $k$ with at least $k/2$ distinct vertices in time
$\poly(n)\cdot 2^{k/2}$. One may ask how this would compare to the
number of distinct vertices returned by the algorithms for
\spath{k}. We show there can be a large gap. Specifically, for any
given $k$ , we show there is a graph $G$ where all simple paths
are of length less than $4\cdot \log k$, but $G$ contains a
$2$-simple path of length $k$. See Theorem \ref{Gap} for a precise
statement.


\section{Overview of the proof of Theorem \ref{thm:main}}
We give an informal sketch of Theorem \ref{thm:main}.
We are given a directed graph $G$ on $n$ vertices, and integers $r\leq k$. We wish
to decide if $G$ contains an $r$-simple path of length $k$.
There are two main stages in our algorithm.
The first is to reduce the task to another one concerning multivariate polynomials.
This part, described below, is very similar to \cite{AB13}.
\paragraph{Reduction to a question about polynomials}
We want to associate our graph $G$ with a certain multivariate
polynomial $p_G$.

We associate with the $i$'th vertex a variable $x_i$.
The monomials of the polynomial will correspond to the paths of length $k$ in $G$.
So we have
\[p_G(\bx)=\sum_{i_1\to i_2\to \cdots \to i_{k} \in G}
x_{i_1}\cdots x_{i_{k}},\]
 where $i_1\to i_2\to \cdots \to i_{k} \in G$
means that $i_1, i_2, \cdots , i_{k}$ is a directed path in $G$.
An important issue is \emph{over what field $\mathbb{F}$ is $p_G$
defined?} A central part of the algorithm is indeed choosing the
appropriate field to work over. Another issue is how efficiently
$p_G$ can be evaluated? (Note that it potentially contains $n^k$
different monomials.) Williams shows in \cite{W09} that using the
adjacency matrix of $G$ it can be computed in $\poly(n)$-time. See
Section \ref{sec:path-to-poly}. For now, think of $p_G$ as defined
over $\mathbb{Q}$, i.e., having integer coefficients. It is easy
to see that $G$ contains an $r$-simple path of length $k$ if and
only if $p_G$ contains a monomial such that the individual degrees
of all variables are at most $r$. Let us call such a monomial an
\emph{$r$-monomial}. Thus our task is reduced to checking whether
a homogenous polynomial of degree $k$ contains an $r$-monomial.

\paragraph{Checking whether $p_G$ contains an $r$-monomial}

Let us assume in this overview for simplicity that $p=r+1$ is
prime. Let us view $p_G$ as a polynomial over $\F$. One problem
with doing this is that if we have $p$ directed paths of length
$k$ passing through the same vertices in different order, this
translates in $p_G$ to $p$ copies of the same monomial summing up
to $0$. To avoid this we need to look at a variant of $p_G$ that
contains auxiliary variables that prevent this cancelation. For
details on this issue see \cite{AB13} and
Section~\ref{sec:path-to-poly}. For this overview let us assume
this does not happen. Recall that we have the equality $a^p=a$ for
any $a\in \F$. Let us look at a monomial that \emph{is not} an
$r$-monomial, say $x_1^{r+1}\cdot x_2=x_1^p\cdot x_2$. The
equality mentioned implies this monomial is equivalent as a
function from $\F^n$ to $\F$ to the monomial $x_1\cdot x_2$. By
the same argument, any monomial that is \emph{not} an $r$-monomial
will be `equivalent' to one of smaller degree. More generally,
$p_G$ that is homogenous of degree $k$ over $\mathbb{Q}$ will be
equivalent to a polynomial of degree smaller than $k$ as a
function from $\F^n$ to $\F$ if and only if it \emph{does not}
contain an $r$-monomial. Thus, we have reduced our task to the
problem of \emph{low-degree testing}. In this context, this
problem is as follows: Given black-box access to a function
$f:\F^n\to \F$ of degree at most $k$, determine whether it has
degree exactly $k$ or less than $k$, using few queries to the
function. Here, for a function $f:\F^n\to\F$, by its degree we
mean the total degree of the lowest-degree polynomial $p\in
\F[x_1,\ldots,x_n]$ representing it. Haramati, Shpilka and Sudan
\cite{HSS} gave an optimal solution (in terms of the number of
queries) to this problem for prime fields. A little work is
required to determine the exact running time of the test of
\cite{HSS} (in addition to the bound on the number of queries to
$f$). See Section \ref{sec:ldtest} for details. For details on
dealing with the case that $r+1$ is not prime, see Section
\ref{sec:rmon}.

\section{Definitions and Preliminary Results}\label{sec:prelim_results}
In this section we give some definitions and preliminary results
that will be used throughout this paper.

Let $G(V,E)$ be a directed graph where $V$ is the set of vertices
and $E\subseteq V\times V$ the set of edges. We denote by $n=|V|$
the number of vertices in the graph and by $m=|E|$ the number of
edges in the graph. A $k$-{\it path} or a {\it path of length $k$}
is a sequence $\rho=v_1,\ldots,v_{k}$ such that $(v_i,v_{i+1})$
is an edge in $G$ for all $i=1,\ldots,k-1$. A {\it path} is a
$k$-path for some integer $k>0$. A path $\rho$ is called {\it
simple} if all the vertices in the path are distinct. We say that
a path $\rho$ in $G$ is \emph{$r$-simple}, if each vertex of $G$
appears in $\rho$ at most $r$ times.  Obviously, a simple path is
a $1$-simple path.

Given as input a directed graph $G$ on $n$ vertices, the
\simpath{r}{k} problem asks for a given $G$ whether it contains an
$r$-simple path of length $k$. When $r=1$ then the problem is
called \spath{k}. The \sipath{r} problem asks for a given $G$ and
integer $k$ whether $G$ contains an $r$-simple $k$-path of length
$k$. The problem SIMPLE PATH is \sipath{1}.

In this paper we
will study the above problems.

The following result gives a reduction from \simpath{r}{k} to
\spath{k}.

\begin{lemma} If \simpath{r}{k} can be solved in time $T(r,k,n,m)$ then
\simpath{sr}{k} can be solved in time $T(r,k,sn,s^2m)$. In particular,
If \spath{k} can be solved in time $T(k,n,m)$ then
\simpath{r}{k} can be solved in time $T(k,rn,r^2m)$.
\end{lemma}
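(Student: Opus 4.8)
The plan is to prove the general statement by a vertex blow-up construction; the ``in particular'' clause then follows immediately by taking $s=r$ and noting that \simpath{1}{k} is exactly \spath{k}.

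Given an instance $G=(V,E)$ of \simpath{sr}{k}, I would build a graph $G'=(V',E')$ as follows. Replace each vertex $v\in V$ by $s$ copies $v^{(1)},\dots,v^{(s)}$, so that $V'=\{v^{(j)}:v\in V,\ 1\le j\le s\}$ and $|V'|=sn$. For every directed edge $(u,v)\in E$, put all $s^2$ edges $(u^{(i)},v^{(j)})$, $1\le i,j\le s$, into $E'$, so $|E'|=s^2m$. Note that $G'$ can be written down in time polynomial in $sn+s^2m$, which is absorbed into $T$ (assuming, as usual, that $T$ is at least the input size). The claim is that $G$ contains an $sr$-simple $k$-path if and only if $G'$ contains an $r$-simple $k$-path; running the assumed algorithm for \simpath{r}{k} on $G'$ then decides the original instance in time $T(r,k,sn,s^2m)$.

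For the easy direction, let $\rho'=w_1,\dots,w_k$ be an $r$-simple $k$-path in $G'$. Projecting each $w_i=v_i^{(j_i)}$ to $v_i\in V$ gives a walk $v_1,\dots,v_k$ in $G$, since each edge $(w_i,w_{i+1})\in E'$ arises from an edge $(v_i,v_{i+1})\in E$, and it has length $k$. Each vertex $v\in V$ occurs in this projected walk once for every occurrence of one of its $s$ copies in $\rho'$, hence at most $s\cdot r$ times; so the projection is an $sr$-simple $k$-path in $G$. For the other direction, let $\rho=v_1,\dots,v_k$ be an $sr$-simple $k$-path in $G$. I would lift it to $G'$ greedily: scan $i=1,\dots,k$ and, for the current occurrence of the vertex $v_i$, assign it some copy $v_i^{(j_i)}$ that has so far been used fewer than $r$ times along the lift. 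Such a copy always exists, because $v_i$ occurs at most $sr$ times in $\rho$ and there are $s$ copies each of capacity $r$ --- a pigeonhole argument. The resulting sequence $w_i=v_i^{(j_i)}$ is a path in $G'$ since $(v_i,v_{i+1})\in E$ forces $(v_i^{(j_i)},v_{i+1}^{(j_{i+1})})\in E'$ for every choice of copy indices, and by construction no copy is used more than $r$ times, so $\rho'$ is an $r$-simple $k$-path.

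The only thing to watch is the bookkeeping in the greedy lift: the capacity argument is applied per vertex of $G$, and it is insensitive to the order of the scan precisely because all inter-copy edges are present in $G'$, so any copy with spare capacity can always be reached. I do not expect a genuine obstacle here --- the lemma is a clean reduction whose content is the observation that fanning each vertex into $s$ copies, joined by a complete set of edges, converts ``at most $sr$ visits to $v$'' into ``at most $r$ visits to each copy of $v$''. Finally, for the displayed ``in particular'' statement, take $s=r$ and replace the source problem \simpath{1}{k} by \spath{k}: an algorithm for \spath{k} running in time $T(k,n,m)$ yields one for \simpath{r}{k} running in time $T(k,rn,r^2m)$.
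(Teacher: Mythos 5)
Your proposal is correct and uses exactly the construction in the paper: blow each vertex up into $s$ copies forming an independent set and replace each edge by all $s^2$ copy-to-copy edges, so that an $sr$-simple $k$-path in $G$ corresponds to an $r$-simple $k$-path in $G'$. The paper leaves the equivalence as ``easy to see''; your projection argument and greedy pigeonhole lifting supply the details of that same equivalence.
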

\begin{proof} Let $G$ be a directed graph. Define the graph $G'=G\odot I_s$
where each vertex $v$ in $G$ is replaced with an independent set
$I_s$ of size $s$ in $G'$ with the vertices
$v^{(1)},\ldots,v^{(s)}$. Each edge $(u,v)$ in $G$ is replaced by
the edges $(u^{(i)},v^{(j)})$, $1\le i,j\le s$.

It is easy to see that there is a $rs$-simple $k$-path in $G$ if and only if there
is a $r$-simple $k$-path in $G'$.
\end{proof}

We now show that \sipath{r} is NP-complete.
\begin{theorem}\label{thm:simpathNPC} For any $r$ the decision
problem \sipath{r} is NP-complete.
\end{theorem}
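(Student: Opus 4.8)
The plan is to establish NP-hardness by a reduction from the classical \hampath\; problem (which is NP-complete on directed graphs), and to observe membership in NP separately. Membership in NP is immediate: given a directed graph $G$ and an integer $k$, a candidate $r$-simple $k$-path is a sequence of $k$ vertices, which has size polynomial in the input, and one can check in polynomial time that consecutive pairs are edges and that no vertex occurs more than $r$ times. So the real content is the hardness direction, and the delicate point is that $r$ is \emph{fixed}, so we cannot simply rescale an $r$-simple path down to a simple path by choosing $r=1$; we must genuinely exploit the freedom in $k$.

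For the hardness reduction I would start from an instance $G$ of \hampath\; on $n$ vertices, i.e. ask whether $G$ has a simple path of length $n$. The idea is to build a new graph $G'$ in which any sufficiently long path is forced to be ``locally simple'' even though we only bound visits at $r$ rather than $1$. A clean way to do this is to take the graph $G' = G \odot I_r$ from the lemma just proved (blow up each vertex into an independent set of size $r$), together with the equivalence there: $G$ has an $r\cdot 1$-simple $n$-path, i.e. a simple $n$-path, if and only if $G'$ has an $r$-simple $n$-path. Wait — that equivalence goes the wrong way for a reduction that must be valid for \emph{all} $r$; the blow-up trades an $rs$-simple path in $G$ for an $r$-simple path in $G'$, so with $s=1$ it says $G'$ has an $r$-simple $n$-path iff $G$ has an $n$-path. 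That is exactly a polynomial-time reduction from SIMPLE $n$-PATH (= \hampath) to \sipath{r}, with $k=n$. So the reduction is: on input $G$ (an $n$-vertex \hampath\; instance), output $(G', n)$ where $G' = G\odot I_r$; this runs in polynomial time since $r$ is a fixed constant, $G'$ has $rn$ vertices and $r^2 m$ edges.

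The one step that needs care — and the main obstacle — is verifying the equivalence ``$G'$ has an $r$-simple path of length $n$ iff $G$ has a simple path of length $n$'' in the direction that matters for correctness. The forward direction (a simple $n$-path in $G$ lifts to an $r$-simple, indeed simple, $n$-path in $G'$ by picking any representative $v^{(1)}$ of each block) is straightforward. For the converse, suppose $\rho'$ is an $r$-simple path of length $n$ in $G'$; project it to $G$ by collapsing each block $\{v^{(1)},\dots,v^{(r)}\}$ to $v$, obtaining a walk $\rho$ of length $n$ in $G$. I need to argue $\rho$ is actually simple. Here is where the blow-up size $r$ is used: since $\rho'$ is $r$-simple, each block is entered at most $r$ times by $\rho'$; but if the projected walk $\rho$ visited some vertex $v$ more than once, that would be fine for a walk — so a naive projection does not force simplicity. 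This means the plain blow-up is \emph{not} enough, and I would instead use the version of the reduction where the target length is taken just large enough to force repetition: run the blow-up and then argue by a counting/pigeonhole bound on path length, or alternatively reduce from \hampath\; with a gadget that makes every block ``single-use'' (e.g. route every visit to $v$ through a private length-$r$ cycle-free gadget whose capacity is exactly $r$ so that a single visit already saturates it). Concretely I expect the cleanest route is: reduce from \hampath, replace each vertex $v$ by a directed structure of $r$ internal vertices that any passage must traverse entirely, so one passage through $v$ costs $r$ units of the vertex-budget at every internal vertex; then an $r$-simple path can pass through each such structure at most once, the target length becomes $k = r\cdot n$ (or similar), and an $r$-simple $k$-path in $G'$ corresponds exactly to a Hamiltonian path in $G$. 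Working out the precise gadget and the exact value of $k$ so that the ``at most once'' conclusion is tight is the step I would spend the most effort on; everything else is routine.
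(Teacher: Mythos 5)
Your proposal correctly disposes of NP membership and correctly diagnoses why the blow-up $G\odot I_s$ cannot be used here: that lemma's equivalence trades an $rs$-simple path in $G$ for an $r$-simple path in $G'$, i.e.\ it reduces a \emph{larger} multiplicity bound to a smaller one, whereas NP-hardness of \sipath{r} requires going from multiplicity $1$ (\hampath) up to multiplicity $r$. Your intermediate claim that the blow-up ``is exactly a polynomial-time reduction from \hampath\ to \sipath{r}'' is false for precisely the reason you then identify yourself: the projection of an $r$-simple path in $G'$ back to $G$ is only a walk, and nothing forces it to be simple. So the proposal, as written, contains no valid reduction; it ends with an unimplemented gadget sketch and an explicit admission that the decisive step is not worked out.

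That missing step is exactly where the paper's proof lives. The paper's gadget adds, for each $v_j$, a twin $\bar v_j$ with edges $(v_j,\bar v_j)$ and $(\bar v_j,v_j)$, so that a path can ``ping-pong'' between $v_j$ and $\bar v_j$ via the subpath $\rho_j=(v_j,\bar v_j,\ldots,\bar v_j,v_j)$ of length $2r-1$; the target length is set to $k=2rn-n+2$. The crux is then an extremal counting argument: one shows that any $r$-simple path attaining length exactly $2rn-n+2$ must begin and end in $\bar V$ (else at most $n+1$ vertices can be visited $r$ times, capping the length at $2rn-n+1$), that after deleting the endpoints every $\bar v_j$ must be visited exactly $r-1$ times and only via a contiguous copy of $\rho_j$ (else the length drops below $2rn-n$), and that nothing can sit between consecutive $\rho_j$'s without overspending some $v_i$'s budget. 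The surviving order of the blocks $\rho_{i_1},\ldots,\rho_{i_n}$ is then a Hamiltonian path in $G$. Your sketch (``a private gadget whose capacity is exactly $r$ so that a single visit saturates it'') points in this direction, but note that a one-way chain of $r$ internal vertices does \emph{not} saturate anything --- a single traversal visits each internal vertex once, so $r$ traversals remain legal; you need the back-and-forth structure (or an equivalent device) together with a length target chosen so tightly that only the fully saturated configuration achieves it. Without that construction and its counting argument, the hardness direction is not established.
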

\begin{proof}
We will reduce deciding \hampath\;  on a graph of $n$ vertices, to deciding \simpath{r}{(2rn-n+2)}
on a graph of $2\cdot n$ vertices.\\
Given an input graph $G=(V,E)$ to \hampath, we define a new graph $G'=(V',E')$ as follows.
We let $V' = V \bigcup \bar{V}$, where $\bar{V}=\{ \bar{v}_1,\bar{v}_2,...,\bar{v}_n\}$
and $E'=E\bigcup \bar{E}$ where
\[\bar{E}=\{(\bar{v}_i,v_i),(v_i,\bar{v}_i) | i\in[n] \}.\]
For $j\in [n]$, it will be convenient to denote by $\rho_j$, the
path of length $2r-1$ that begins at $v_j$, goes back and forth
from $v_j$ to $\bar{v}_j$ and ends in $v_j$,
 i.e., $\rho_j \triangleq (v_j,\bar{v}_j,\ldots,v_j,\bar{v}_j,v_j)$.

 We make 2 observations.\\
 \begin{enumerate}
 \item If a vertex $\bar{v}_j\in \bar{V}$ appears $r$ times in an $r$-simple path $\rho$ then it must be
 the start or end vertex of $\rho$: To see this, note that by construction of $G'$, if $\bar{v}_j$ is \emph{not}
 the start or end vertex of $\rho$, visiting it $r$ times requires visiting $v_j$ $r+1$ times.
\item Suppose $\rho$ is an $r$-simple path that begins and ends in a vertex of $V$.
If $\rho$ visits a vertex $\bar{v}_j\in \bar{V}$ $r-1$ times, then it must contain $\rho_j$ as a subpath:
To see this, note that as $\rho$ does not start in $\bar{v}_j$, any visit to $\bar{v}_j$ must have a visit to $v_j$ before and
after. The only way this would sum up to at most $r$ visits in $v_j$ is if these visits where continuous. In other words, only if $\rho$ contains $\rho_j$.
\end{enumerate}

We want to show that $G$ contains a Hamiltonian path if and only
if $G'$ contains an $r$-simple path of length $2rn-n+2$. Assume
first that $G$ contains a Hamiltonian path $v_{i_1}\cdot
v_{i_2}\cdots v_{i_n}$. Define the path $\rho =
\bar{v}_{i_1}\cdot\rho_{i_1}\cdot\rho_{i_2}\cdots\rho_{i_n}\cdot
\bar{v}_{i_n}$. It is of length
\[n\cdot(2r-1)+2 = 2rn - n +2, \]
and it is $r$-simple.\\
Now assume that we have an $r$-simple path $\rho$ in $G'$ of
length $2nr-n+2$. We first claim that $\rho$ must start and end
with a vertex from $\bar{V}$: Otherwise, using the first
observation above, $\rho$ contains at most $n+1$ vertices
appearing $r$ times, and thus has length at most
\[(n+1)\cdot r + (n-1)\cdot (r-1)  = 2rn - n+1 .\]
Let $\rho'$ be the path $\rho$ with the first and last vertex
deleted. So $\rho'$ has length $2rn - n$ and begins and ends in
a vertex of $V$. Note that by the first observation $\rho'$ visits
all vertices of $\bar{V}$ at most $r-1$ times. We now claim that
for every $j\in [n]$, $\rho'$ must contain $\rho_j$ as a subpath.
Otherwise, by the second observation, $\rho'$ visits \emph{some}
vertex of $\bar{V}$ \emph{less} than $r-1$ times. In this case it
has length less than $n\cdot r + n\cdot(r-1)  = 2nr -n$. A
contradiction. Thus $\rho'$ contains every $\rho_j$ as a subpath.
It cannot contain anything else `between' the $\rho_j$'s, as then
it would visit some vertex of $V$ \emph{more} than $r$ times. So
\[\rho'=\rho_{i_1}\cdots\rho_{i_n},\]
for some ordering $i_1,\ldots,i_n$ of $[n]$. It follows that $v_{i_1}\cdots v_{i_n}$ is a Hamiltonian path in $G$.
\end{proof}

The above result implies
\begin{corollary}\label{LoB} If \simpath{r}{k} can be solved in $T(r,k,n,m)$
time then {\rm HAMILTONIAN-PATH} can be solved in
$T(r,2rn-n+2,2n,m+2n)$.

In particular, if there is an algorithm for \simpath{r}{k} that
runs in time $poly(n)\cdot 2^{(c/2)(k/r)}$ then there is an
algorithm for {\rm HAMILTONIAN-PATH} that runs in time
$poly(n)\cdot 2^{cn}$.
\end{corollary}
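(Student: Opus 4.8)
The statement is an immediate consequence of Theorem \ref{thm:simpathNPC}, so the plan is simply to extract the quantitative parameters from the reduction in its proof. Recall that the proof of Theorem \ref{thm:simpathNPC} takes a graph $G$ on $n$ vertices as an instance of \hampath\ and builds the graph $G'$ with vertex set $V'=V\cup\bar V$ and edge set $E'=E\cup\bar E$, where $\bar E=\{(\bar v_i,v_i),(v_i,\bar v_i)\mid i\in[n]\}$. Thus $G'$ has $|V'|=2n$ vertices and $|E'|=m+2n$ edges. The proof establishes that $G$ has a Hamiltonian path if and only if $G'$ has an $r$-simple path of length exactly $k=2rn-n+2$. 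So the first step is: given an algorithm $A$ for \simpath{r}{k} running in time $T(r,k,n,m)$, to decide \hampath\ on $G$ we construct $G'$ (in time $\poly(n)$, which is absorbed) and run $A$ on $G'$ with the length parameter set to $2rn-n+2$. This costs $T(r,\,2rn-n+2,\,2n,\,m+2n)$ time, which is exactly the first claimed bound.

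For the second, more concrete statement, I would specialize $T$ to the stated form. If $A$ runs in time $\poly(n)\cdot 2^{(c/2)(k/r)}$, then running it on $G'$ with $k=2rn-n+2$ gives running time
\[
\poly(2n)\cdot 2^{(c/2)\cdot (2rn-n+2)/r}.
\]
The key computation is to bound the exponent: $(c/2)\cdot(2rn-n+2)/r = cn - (c/2)(n-2)/r \le cn$, since $r\ge 1$, $c>0$, and $n\ge 2$ (for $n<2$ the problem is trivial). Hence the running time is at most $\poly(n)\cdot 2^{cn}$, as claimed. This is the whole argument; there is essentially no obstacle, only the bookkeeping of substituting $k=2rn-n+2$ into the exponent and checking that the lower-order term has the right sign so that the exponent does not exceed $cn$.

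One small point worth being careful about: Theorem \ref{thm:simpathNPC} is phrased for the problem \sipath{r}, where $k$ is part of the input, whereas here we invoke an algorithm for \simpath{r}{k} with $k$ fixed. These coincide once we observe that the reduction produces a single specific value $k=2rn-n+2$ depending only on $n$ and $r$; so calling the fixed-$k$ algorithm with that value of $k$ is legitimate, and the graph passed to it indeed has the property that an $r$-simple path of length $k$ exists iff $G$ is Hamiltonian. After noting this, the corollary follows directly.
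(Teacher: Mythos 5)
Your proposal is correct and is exactly the argument the paper intends (the paper gives no explicit proof, just "the above result implies"): read off the parameters $2n$, $m+2n$, $k=2rn-n+2$ from the reduction in Theorem \ref{thm:simpathNPC}, then substitute into the exponent and check $(c/2)(2rn-n+2)/r\le cn$. Your side remark about reconciling the fixed-$k$ problem with the version where $k$ is part of the input is a reasonable bit of care and does not change the argument.
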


\section{Gap}

In this section we show that the gap between the longest simple
path and the longest $r$-simple path can be exponentially large
even for $r=2$.

We first give the following lower bound for the gap

\begin{theorem} If $G$ contains an $r$-simple path of length
$k$ then $G$ contains a simple path of length $\lceil \frac{\log
k}{\log r}\rceil$.
\end{theorem}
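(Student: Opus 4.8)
The plan is to extract from the $r$-simple $k$-path a rooted tree sitting inside $G$ whose branching is at most $r$ and whose root-to-leaf paths are honest simple paths of $G$, and then invoke the fact that a tree with bounded branching and many vertices must be deep.

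Write $\rho=v_1,\dots,v_k$ for the $r$-simple path, and let $w_1,w_2,\dots,w_d$ be the distinct vertices it visits, indexed in order of first appearance, so $w_1=v_1$. Since each $w_j$ is visited at most $r$ times, $k\le rd$, i.e.\ $d\ge k/r$. Form a rooted tree $T$ on $\{w_1,\dots,w_d\}$ with root $w_1$ by letting, for $j\ge 2$, the parent of $w_j$ be the vertex occurring on $\rho$ immediately before the \emph{first} occurrence of $w_j$; by the ordering this vertex is some $w_i$ with $i<j$, and $(w_i,w_j)$ is an edge of $G$. Hence every root-to-leaf path of $T$ is a simple path of $G$. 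Moreover each $w_i$ has at most $r$ children in $T$: distinct children of $w_i$ correspond to distinct occurrences of $w_i$ on $\rho$ (the position just before the child's first occurrence), and $w_i$ occurs at most $r$ times.

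Let $\ell$ be the number of vertices on a longest simple path of $G$; then every root-to-leaf path of $T$ has at most $\ell$ vertices, so $T$ has at most $1+r+\dots+r^{\ell-1}=(r^{\ell}-1)/(r-1)$ vertices. Combining with $d\ge k/r$ gives an inequality of the form $k/r\le (r^{\ell}-1)/(r-1)$, hence $\ell\ge\log_r k-O(1)$; since $\ell$ is an integer this is then pushed to $\ell\ge\lceil \log k/\log r\rceil$, which is exactly the assertion that $G$ contains a simple path of that length.

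The step I expect to be the main obstacle is this last arithmetic: feeding the crude bound $d\ge k/r$ into the tree estimate only yields roughly $\log_r k-1$, so landing precisely on $\lceil \log_r k\rceil$ requires a little more. One natural way is to observe that when $d$ is close to $k/r$ the many repeated visits of $\rho$ must produce edges of $G$ among the $w_j$'s that are not tree edges, and any such edge either closes a cycle or extends a root-to-leaf path into a longer simple path of $G$; tracking this (or running a careful analysis of the boundary cases, keeping $d\ge\lceil k/r\rceil$) recovers the stated constant. Everything else is routine.
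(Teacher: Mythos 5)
Your argument is correct in its main thrust but takes a genuinely different route from the paper. The paper's proof is a greedy nested-subpath extraction: starting from the first vertex $v_0$ of $\rho$, it repeatedly passes to a subpath avoiding all previously chosen vertices, losing a factor of at most $r$ in length at each step, and reads off the chosen vertices $v_0,v_1,\dots$ as the simple path. You instead build the ``first-visit tree'' on the $d\ge k/r$ distinct vertices of $\rho$, and both of its key properties are sound as you state them: the parent of $w_j$ (the predecessor of $w_j$'s first occurrence) is an earlier-appearing vertex joined to $w_j$ by an edge of $G$, so root-to-node paths in $T$ are simple directed paths of $G$; and each node has at most $r$ children because distinct children consume distinct occurrences of the parent. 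The counting $d\le(r^{\ell}-1)/(r-1)$ then forces depth $\ell\ge\log_r k-O(1)$. Your version has the mild advantage of bounding the depth in terms of the number of distinct vertices $d$ rather than $k$ directly, and is arguably cleaner than the paper's iterated shrinking.

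On the constant, which you correctly identify as the delicate point: your tree estimate yields roughly $\log_r k-1$, and your proposed repair (exploiting non-tree edges among the $w_j$'s to extend a root-to-leaf path, or tightening to $d\ge\lceil k/r\rceil$) is only a sketch, not a proof, so as written you have not established $\lceil\log k/\log r\rceil$ exactly. However, the paper's own proof has precisely the same defect: it constructs a simple path on $t=\lfloor(\log k-1)/\log r\rfloor$ vertices, which also falls short of the stated $\lceil\log k/\log r\rceil$ by an additive constant. The theorem is really a $\Theta(\log k/\log r)$ statement, and your argument establishes it to the same precision as the paper's, so I would not count the unresolved boundary case against you beyond noting that the final sentence of your write-up should be either completed or weakened to match what is actually proved.
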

\begin{proof}
Let $t=\lfloor \frac{\log {k} -1}{\log r}\rfloor$.
Let $\rho$ be an $r$-simple path whose first vertex is $v_0$. We will use $\rho$ to construct a simple path $\bar{\rho}$ of length $\lfloor\frac{logk}{logr}\rfloor$.
We denote $\rho_0=\rho$. As $v_0$ appears at most $r$ times in $\rho_0$, there must
be a subpath $\rho_1$ of $\rho_0$ of length at least $(k-r)/r$ where $v_0$ does not appear.
Let $v_1$ be the first vertex of $\rho_1$.
Similarly, for $1< i \leq t$, we define the subpath $\rho_i$ of $\rho_{i-1}$ to be
a subpath of length at least
\[(k-r-\ldots -r^i)/r^i \geq (k-r^{i+1})/r^i,\]
where $v_1,\ldots,v_{i-1}$ do not appear, and define $v_i$ to be the first vertex of $\rho_i$.
Note that we can always assume there is an edge from $v_{i-1}$ to $v_i$ as we can start $\rho_i$ just
after an appearance of $v_{i-1}$ in $\rho_{i-1}$.
Note that for $1\leq i \leq t$, such a $v_i$ as defined indeed exists as
$(k-r^{i+1})/r^i\geq 1$ when
\[ k\geq 2\cdot r^{i+1}\leftrightarrow i+1 \leq (\log k -1)/\log r\]
Thus, $v_0\cdots v_{t-1}$ is a simple path of the desired length.
\end{proof}
%
%
%

Before we give the upper bound we give the following definition. A
{\it full $r$-tree} is a tree where each vertex has $r$ children
and all the leaves of the tree are in the same level. The root is
on level $1$.
\begin{theorem}\label{Gap} There is a graph $G$ that contains an $r$-simple
path of length $k$ and no simple path of length greater than
$4\log k/\log r$.
\end{theorem}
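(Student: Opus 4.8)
The plan is to construct $G$ explicitly from a full $r$-tree and show that it simultaneously admits a long $r$-simple path while having only short simple paths. The key observation driving the construction is that a depth-first traversal of a full $r$-tree visits each internal vertex exactly $r$ times (once per child-subtree it descends into), while the longest simple path in a tree is just its diameter, which is logarithmic in the number of leaves.

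\textbf{Construction.} Let $T$ be a full $r$-tree of depth $d$, so $T$ has $r^{d-1}$ leaves and $\frac{r^d-1}{r-1}$ vertices total. Orient the edges of $T$ in both directions (i.e., replace each tree edge by a pair of anti-parallel directed edges), giving a directed graph $G$. First I would exhibit an $r$-simple path of length roughly $2r^{d-1}$: take the standard DFS walk on $T$ that, starting from the root, recursively enters each of the $r$ subtrees of a vertex in order and backtracks after each. This walk traverses every edge exactly twice (once down, once up), so it has length $2 \cdot \frac{r^d - 1}{r-1} - 2$, which is $\Theta(r^{d-1})$ for $r \geq 2$. More importantly, each internal vertex is visited exactly $r$ times and each leaf exactly once (or we can pad leaves appropriately), so the walk is an $r$-simple path. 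Choosing $d$ so that this length is at least $k$ requires $d = O(\log k / \log r)$; a small constant slack (say a factor $2$ in $d$) lets us reach any desired $k$ up to the length of the full DFS walk, and one can truncate the walk to get length exactly $k$.

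\textbf{Bounding simple paths.} The second half is to show $G$ has no long simple path. Since $G$ is (the bidirected version of) a tree, any simple path in $G$ corresponds to a simple path in the underlying undirected tree $T$ — going back and forth on an anti-parallel pair would repeat a vertex. A simple path in a tree of depth $d$ has length at most $2(d-1)$ (it goes up to at most the root and back down), so the longest simple path in $G$ has length at most $2d - 2$. With $d = O(\log k / \log r)$ and careful tracking of constants, this gives the bound $4 \log k / \log r$ claimed in the statement. The arithmetic here is where I'd be most careful: I need $2r^{d-1} \gtrsim k$, i.e. $d - 1 \gtrsim \log(k/2)/\log r$, and simultaneously $2(d-1) \leq 4\log k/\log r$, i.e. $d - 1 \leq 2\log k /\log r$; these are compatible with room to spare, so I can pick $d$ to be, say, $\lceil \log k / \log r \rceil + 1$ and check both inequalities hold (using $r \geq 2$ and $k$ large; small $k$ can be handled separately or absorbed into constants).

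\textbf{Main obstacle.} The conceptual content is easy; the real work is (i) making the DFS walk genuinely a \emph{path} of length \emph{exactly} $k$ rather than something close to $k$ — this needs a truncation argument, noting that prefixes of the DFS walk of any length up to the full length are still $r$-simple paths (a prefix visits each vertex at most as often as the whole walk), so we just stop the walk once it has made $k$ steps, provided the full walk is long enough; and (ii) pinning down the constant $4$ in $4\log k/\log r$ so that both the "long $r$-simple path exists" and "no long simple path" conditions hold for the \emph{same} choice of $d$. I expect a clean choice like $d-1 = \lceil 2\log k/\log r\rceil$ makes the simple-path bound exactly $4\log k/\log r$ and leaves the DFS walk length $2(r^{d}-1)/(r-1) - 2$ comfortably above $k$, since $r^{d-1} \geq r^{2\log k/\log r} = k^2 \gg k$.
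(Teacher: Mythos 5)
There is a genuine error in the first half of your construction: the visit count of the DFS walk is off by one. In a full $r$-tree, the standard depth-first closed walk visits each internal vertex $r+1$ times, not $r$ times: it is at $v$ once upon arrival from the parent, and then once more after returning from each of the $r$ child subtrees (for a star with root $v$ and leaves $u_1,\dots,u_r$ the walk is $v,u_1,v,u_2,v,\dots,u_r,v$, containing $v$ exactly $r+1$ times). So your walk is only an $(r+1)$-simple path, and the claimed ``each internal vertex is visited exactly $r$ times (once per child-subtree)'' forgets the visit that precedes the first descent (equivalently, the one that follows the last return). This is exactly why the paper builds the graph from a full $(r-1)$-tree rather than a full $r$-tree: there the DFS walk hits each internal vertex $(r-1)+1=r$ times and is genuinely $r$-simple, while the tree still branches, so the longest simple path is $O(\log_{r-1} n)=O(\log k/\log r)$ for $r\ge 3$.

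Applying that fix exposes a second gap: for $r=2$ the $(r-1)$-tree is a $1$-ary tree, i.e.\ a bare path, which branches nowhere and therefore contains a simple path of length $k$ — no gap at all. Your proposal does not address this case, and it cannot be absorbed into ``small $k$'' bookkeeping since it fails for every $k$. The paper handles $r=2$ with a separate construction: a full binary tree with an extra edge joining the two children of each vertex, traversed root $\to$ left subtree $\to$ (sibling edge) $\to$ right subtree $\to$ root, so that every vertex is visited at most twice, while simple paths are still confined to length $O(\log n)$. The remaining parts of your argument (truncating a prefix of the walk to get length exactly $k$, bounding simple paths in a bidirected tree by twice the depth, and the choice of depth $d$) are fine, but the theorem as stated is for all $r\ge 2$, so both issues above must be repaired.
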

\begin{proof} We first give the proof for
$r\ge 3$. Consider a full $(r-1)$-tree of depth $\lceil \log
n/\log(r-1)\rceil$. Remove vertices from the lowest level (leaves)
so the number of vertices in the graph is $n$. Obviously there is
an $r$-simple path of length $k\ge n$. Any simple tour in this tree
must change level at each step and if it changes from level $\ell$
to level $\ell+1$ it cannot go back in the following step to level $\ell$. So the
longest possible simple path is $2\lceil \log
n/\log(r-1)\rceil-2\le 3.17 (\log k/\log r)$.

For $r=2$ we take a full binary tree ($2$-tree) and add an edge
between every two children of the same vertex. The $2$-simple path
starts from the root $v$, recursively makes a tour in the left tree
of $v$ then moves to the root of the right tree of $v$ (via the
edge that we added) then recursively makes a tour in the right tree
of $v$ and then visit $v$ again. Obviously this is a $2$-simple
path of length $k>n$. A simple tour in this graph can stay in
the same level only twice, can move to a higher level or can move
to a lower level. Again here if it moves from level $\ell$ to
$\ell+1$ it cannot go back in the following step to level $\ell$. Therefore the longest
simple path is of length at most $4\log n \leq 4\log k$.
\end{proof}

\ignore{
\begin{theorem}\label{thm:gap}
If $G$ contains an $r$-simple path of length $k$ then there is $G$ which contains no simple path of length  $\geq 4\lfloor \frac{\log k}{\log r}\rfloor$.
\end{theorem}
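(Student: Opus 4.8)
The plan is to exhibit an explicit graph $G$ — a tree (possibly with a few extra edges) — that simultaneously admits a long $r$-simple path and has bounded diameter, so that no simple path in it can be long. The guiding intuition is that a full $(r-1)$-ary tree of depth $d$ has only about $(r-1)^{d}$ vertices but supports an $r$-simple Euler-type tour of length roughly $(r-1)^{d}$: each internal vertex is entered once per child plus once from its parent, totalling $r$ visits, while each leaf is visited once. At the same time, any simple path in a tree is determined by its two endpoints and has length at most twice the depth, since it must go up to a common ancestor and back down. Thus the length of the longest $r$-simple path is exponential in the depth while the longest simple path is only linear in the depth, giving the claimed exponential gap.

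For $r \ge 3$ I would proceed as follows. First, fix a target length $k$ and set $n = k$ (any $n \le k$ works since a path of length $k$ trivially contains more than $n$ vertices when $r \ge 2$). Take a full $(r-1)$-ary tree of depth $\lceil \log n / \log(r-1)\rceil$ and prune vertices from the bottom level until exactly $n$ vertices remain. Then argue: (i) there is an $r$-simple path of length $\ge n$ obtained by the recursive tour described above; (ii) any simple path in the (pruned) tree has length at most $2\lceil \log n/\log(r-1)\rceil - 2$, because a simple walk in a tree cannot revisit a level it has just left — once it goes from level $\ell$ to $\ell+1$ it is in a strictly smaller subtree and can never return to that copy of level $\ell$ — so the number of vertices on the path is bounded by twice the depth. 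Finally, bound $2\lceil \log n/\log(r-1)\rceil - 2 \le 4\log k/\log r$ using $n = k$ and $\log(r-1) \ge (\log r)/2$ for $r \ge 3$ (in fact the constant $3.17$ suffices here, but $4$ gives slack).

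For the edge case $r = 2$, a plain tree does not support a long $2$-simple path — each internal vertex of a binary tree would need $3$ visits. The fix is to take a full binary tree and add, for every vertex, an edge between its two children. Now the recursive tour works: from a vertex $v$, descend into the left subtree, tour it recursively, cross the new sibling edge to the right child, tour the right subtree recursively, and return to $v$; this visits $v$ exactly twice. One checks this yields a $2$-simple path of length $> n$. For the simple-path bound, the only change is that a simple walk can now stay at the same level for one extra step (using a sibling edge), so the longest simple path has at most about $2 \cdot 2 \log n$ vertices, i.e. length at most $4\log n \le 4\log k$.

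The main obstacle I anticipate is not the simple-path upper bound — that is a clean structural fact about trees — but rather carefully verifying the existence and exact length of the $r$-simple tour after pruning the bottom level, and getting the constants to fit under $4\log k/\log r$ uniformly in $r$. The pruning destroys the perfect regularity of the tree, so the recursive tour has to be described for a tree that is full except at the last level; one must check that removing leaves only shortens the tour and never forces an extra visit to any internal vertex, and that the resulting length is still at least $k$. Handling $r=2$ separately, as above, is necessary precisely because the counting $(r-1)\text{ children}+1\text{ parent}=r$ degenerates there.
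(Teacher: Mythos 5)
Your proposal is correct and follows essentially the same route as the paper: a full $(r-1)$-ary tree (pruned to $n$ vertices) with the depth-first $r$-simple tour and the level-monotonicity bound on simple paths for $r\ge 3$, and the same sibling-edge modification of the binary tree for $r=2$. The details you flag as needing care (pruning, constants, the degenerate case $r=2$) are exactly the ones the paper handles, and in the same way.
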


\begin{corollary} The Gap is $\lfloor \frac{\log k}{\log r}\rfloor$ .
\end{corollary}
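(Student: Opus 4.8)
The plan is to realise $G$ as (essentially) a shallow rooted tree: a tree of depth $d$ has no simple path on more than $2d-1$ vertices, yet it admits a depth-first ``tour'' that visits $\Theta(n)$ vertices while revisiting each vertex a number of times governed only by its degree. Choosing the branching factor so that a degree-$r$ tree has the right depth for $n$ vertices, this tour is $r$-simple and long, while every simple path stays logarithmic in $k$.

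Concretely, for $r\ge 3$ I would take a full $(r-1)$-ary tree and delete vertices from the bottom level until exactly $n$ vertices remain, with $n$ chosen so that $n\le k$ and $2n-1\ge k$ (so $n=\Theta(k)$); its depth is then $d=O(\log k/\log(r-1))$. For the existence of the $r$-simple path I would write down the Euler tour, i.e.\ a DFS in which a vertex is recorded each time it is entered: an internal vertex, having $r-1$ children and at most one parent, is recorded exactly $r$ times and a leaf once, so the tour is $r$-simple; it has $2n-1\ge k$ vertices, hence a prefix of it is an $r$-simple path of length exactly $k$. For the upper bound I would use the elementary fact that a simple path in a rooted tree ascends to a common ancestor and then descends, so it has at most $2d-1$ vertices; since $(r-1)^2\ge r$ for $r\ge 3$ gives $\log(r-1)\ge\tfrac12\log r$, this is at most $4\log k/\log r$ once $k$ is large enough to absorb the additive $\lceil\cdot\rceil$ slack.

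The case $r=2$ needs a different gadget, since a ``full $1$-ary tree'' is just a path. Here I would take a full binary tree on $n\le k$ vertices (depth $d=O(\log k)$) and add an edge between the two children of every internal vertex. The $2$-simple path is then defined recursively on subtrees: the tour of the subtree rooted at $v$ with children $\ell,m$ is $v$, followed by the tour of $\ell$'s subtree (which begins and ends at $\ell$), followed by the new sibling edge to $m$, followed by the tour of $m$'s subtree (which begins and ends at $m$), followed by the edge back to $v$; a leaf's tour is just itself. Every vertex is then visited at most twice, so the tour is $2$-simple, and its length exceeds $n$, so a prefix gives a $2$-simple path of length exactly $k$. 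For the upper bound I would track the level of a vertex along a simple path: tree edges change the level by $1$ and sibling edges keep it fixed, a sibling step cannot be immediately followed by another sibling step at the same parent, and a level change from $\ell$ to $\ell+1$ cannot be undone at the next step; a short case analysis then bounds the number of vertices on a simple path by roughly $4d\le 4\log k$.

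I expect the only real work to be bookkeeping: verifying that the per-vertex revisit counts in the two tours are exactly $r$ (respectively $2$), and controlling the ceilings and additive constants so the simple-path bound lands at $4\log k/\log r$ rather than $4\log k/\log r+O(1)$ — which may force the statement to be read as holding for all sufficiently large $k$, or with the constant slightly relaxed. The level-counting argument in the $r=2$ case is the fussiest piece, but it remains elementary.
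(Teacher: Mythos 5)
Your construction is, in essence, the paper's own proof of Theorem \ref{Gap}: a full $(r-1)$-ary tree of logarithmic depth for $r\ge 3$ (the Euler/DFS tour visits each internal vertex exactly $r$ times and has $2n-1$ vertices, while any simple path in a tree ascends and then descends and so has at most twice-the-depth many vertices), and for $r=2$ a full binary tree with an added edge between every pair of siblings. That half of the argument is sound, modulo the additive-constant slack you already flag.

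The genuine gap is that the corollary is a two-sided statement. ``The Gap is $\lfloor \log k/\log r\rfloor$'' asserts not only that \emph{some} graph with an $r$-simple $k$-path has no simple path longer than $O(\log k/\log r)$ --- which is what your construction delivers --- but also that \emph{every} graph containing an $r$-simple $k$-path must contain a simple path of length $\Omega(\log k/\log r)$; only together do the two bounds pin the gap down to $\Theta(\log k/\log r)$. Your proposal contains nothing addressing the second direction. The paper proves it by an extraction argument: let $\rho_0=\rho$ be the given $r$-simple $k$-path and $v_0$ its first vertex; since $v_0$ occurs at most $r$ times, $\rho_0$ has a subpath $\rho_1$ of length at least $(k-r)/r$ avoiding $v_0$, and one can take $\rho_1$ to begin immediately after an occurrence of $v_0$ so that its first vertex $v_1$ is an out-neighbour of $v_0$. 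Iterating, $\rho_i$ avoids $v_0,\ldots,v_{i-1}$ and has length at least $(k-r^{i+1})/r^i$, which remains at least $1$ for $i$ up to roughly $(\log k-1)/\log r$, so $v_0v_1v_2\cdots$ is a simple path of the required length. Without this half, the corollary does not follow from the construction alone.
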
}

\section{From $r$-Simple $k$-Path to Multivariate Polynomial}\label{sec:path-to-poly}
The purpose of this section is to reduce the question of whether a graph $G$ contains
an $r$-simple $k$-path, to that of whether a certain multivariate polynomial \emph{contains an $r$-monomial},
as defined below.
\begin{dfn}[$r$-monomial]
Fix a field $\FF$.
Fix a monomial  $M= M(\bfz)=z_1^{i_1}\cdots z_t^{i_t}$.
\begin{itemize}
\item
We say $M$ is an \emph{$r$-monomial} if no variable appears with degree larger than $r$ in $M$.
That is, for all $1\leq j \leq t$, $i_j\leq r$.
\item  Let $f(\bfz)$ be a multivariate polynomial over $\FF$.
We say $f$ \emph{contains an $r$-monomial}, if there is an $r$-monomial $M(\bfz)$ appearing with a nonzero coefficient
$c\in \FF$ in $f$.
\end{itemize}
\end{dfn}

We now describe this reduction.

Let $G(V,E)$ be a directed graph where $V=\{1,2,\ldots,n\}$. Let
$A$ be the adjacency matrix and $B$ be the $n\times n$ matrix such
that $B_{i,j}=x_i\cdot A_{i,j}$ where $x_i$, $i=1,\ldots,n$ are
indeterminates. Let $\bf1$ be the row $n$-vector of $1$s and
$\bfx=(x_1,\ldots,x_n)^T$. Consider the polynomial
$p_G(\bfx)=\bfone \cdot B^{k-1}\cdot \bfx$. It is easy to see
$$p_G(\bfx)=\sum_{i_1\to i_2\to \cdots \to i_{k} \in G}
x_{i_1}\cdots x_{i_{k}}$$ where $i_1\to i_2\to \cdots \to i_{k} \in G$
means that $i_1, i_2, \cdots , i_{k}$ is a directed path in $G$.

Obviously, for field of characteristic zero there is an $r$-simple
$k$-path if and only if $p_G(\bfx)$ contains an $r$-monomial.
For other fields the later statement is
not true. For example, in undirected graph, $k=2$, and $r=1$ if
$(1,2)\in E$ and the field is of characteristic $2$ then the
monomial $x_1x_2$ occurs twice and will vanish in $p_G(\bfx)$. We
solve the problem as follows.

Let $B^{(m)}$ be an $n\times n$ matrices, $m=2,\ldots,k$, such
that $B^{(m)}_{i,j}=x_i\cdot y_{m,i}\cdot A_{i,j}$ where $x_i$ and
$y_{m,i}$ are indeterminates. Let, $\bfy=(\bfy_1,\ldots,\bfy_{k})$
and $\bfy_m =(y_{m,1},\ldots,y_{m,n})$. Let
$\bfx\centerdot\bfy=(x_1y_{1,1},\ldots,x_ny_{1,n})$. Consider the
polynomial $P_G(\bfx,\bfy)=\bfone B^{(k)}B^{(k-1)}\cdots
B^{(2)}(\bfx\centerdot\bfy)$. It is easy to see that

$$P_G(\bfx,\bfy)=\sum_{i_1\to i_2\to \cdots \to i_{k} \in G} x_{i_1}\cdots x_{i_{k}} y_{1,i_1}\cdots y_{k,i_{k}}$$
Obviously, no two paths have the same monomial in $P_G$.
Note that as $P_G$ contains only $\B$ coefficients, we can define it over any field $\FF$.
It will actually be convenient to view it as a polynomial $P_G(\bfx)$ whose coefficients are in the field
of rational functions $\FF(\bfy)$.
Therefore, for any field, there is an $r$-simple $k$-path if and
only if $P_G(\bfx,\bfy)$ contains an $r$-monomial in $\bfx$.
We record this fact in the lemma below.
\begin{lemma}\label{lem:path_to_pG}
Fix any field $\FF$.
The graph $G$ contains an $r$-simple $k$-path if and only if
the polynomial $P_G$, defined over $\FF(\bfy)$, contains an $r$-monomial $M(\bfx)$.
\end{lemma}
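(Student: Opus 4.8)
The plan is to view $P_G$ as a polynomial in the variables $\bfx$ whose coefficients lie in $\FF[\bfy]\subseteq\FF(\bfy)$, to collect the terms of $P_G$ by their $\bfx$-part, and to show that an $\bfx$-monomial survives with a nonzero coefficient exactly when it is the $\bfx$-monomial of some $k$-path of $G$. Concretely, to each directed $k$-path $\rho=(i_1\to i_2\to\cdots\to i_k)$ in $G$ I would associate its $\bfx$-monomial $M_\rho(\bfx)=x_{i_1}x_{i_2}\cdots x_{i_k}$ and its $\bfy$-monomial $N_\rho(\bfy)=y_{1,i_1}y_{2,i_2}\cdots y_{k,i_k}$. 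The formula for $P_G$ stated just above the lemma then reads
\[ P_G(\bfx,\bfy)=\sum_{\rho} M_\rho(\bfx)\,N_\rho(\bfy)=\sum_{M}M(\bfx)\Bigl(\sum_{\rho:\,M_\rho=M}N_\rho(\bfy)\Bigr), \]
so that the coefficient of the $\bfx$-monomial $M$ in $P_G$, regarded over $\FF(\bfy)$, is $c_M(\bfy):=\sum_{\rho:\,M_\rho=M}N_\rho(\bfy)$, and the set of $\bfx$-monomials occurring in $P_G$ at all is exactly $\{M_\rho:\rho\text{ a directed }k\text{-path in }G\}$.

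Next I would make precise the remark, already noted before the lemma, that no two paths share a monomial in $P_G$: the $\bfy$-monomials separate paths. Given $N_\rho(\bfy)$, for each step $j\in\{1,\dots,k\}$ the vertex $i_j$ is recovered as the unique index $i$ with $y_{j,i}\mid N_\rho$; hence $\rho\mapsto N_\rho$ is injective. In particular the monomials $N_\rho(\bfy)$ appearing in $c_M(\bfy)$ are pairwise distinct, each with coefficient $1\neq 0$ in $\FF$, so $c_M(\bfy)$ is a nonzero element of $\FF[\bfy]$ — equivalently of $\FF(\bfy)$, since $\FF[\bfy]$ is an integral domain — if and only if there is at least one directed $k$-path $\rho$ in $G$ with $M_\rho=M$. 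This injectivity observation is the entire content of the argument; it is exactly the cancellation-avoidance role of the auxiliary variables $\bfy$, and I do not expect any genuine obstacle beyond stating it carefully.

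Finally I would translate "$M$ is an $r$-monomial in $\bfx$" into graph language: writing $M=M_\rho$ for a $k$-path $\rho$, the degree of $x_i$ in $M_\rho$ equals the number of times vertex $i$ occurs along $\rho$, so $M_\rho$ is an $r$-monomial precisely when $\rho$ is $r$-simple. Combining the two previous steps gives the chain of equivalences: $P_G$ contains an $r$-monomial $M(\bfx)$ $\iff$ some directed $k$-path $\rho$ in $G$ has $M_\rho=M$ with $M$ an $r$-monomial $\iff$ $G$ contains an $r$-simple $k$-path, which is the claim. The only place any work is done is the bookkeeping in the middle step, so I expect the write-up to be short.
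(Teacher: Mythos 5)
Your proposal is correct and follows exactly the route the paper intends: the paper gives no separate proof of this lemma, relying on the preceding observation that the auxiliary $\bfy$-variables make the monomials of distinct paths distinct, which is precisely the injectivity of $\rho\mapsto N_\rho$ that you verify. Your write-up merely makes explicit the grouping of terms by $\bfx$-monomial and the fact that a sum of distinct monomials with coefficient $1$ is nonzero over any field, so there is nothing to add.
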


\section{Low Degree Tester}\label{sec:ldtest}
In this section we present a tester that determines whether a
function $f:\F^n\to \F$ of degree \emph{at most} $d$ has, in fact,
degree \emph{less than} $d$. The important point is that the
tester will be able to do this using few black-box queries to $f$.
The results of this section essentially follow from the work of
Haramaty, Sudan and Shpilka \cite{HSS}.

First, let us say precisely what we mean by the \emph{degree} of a function $f:\F^n\to \F$.
We define this to be the degree of the lowest degree polynomial $f'\in\F[\bx]$ that agrees with
$f$ as a function from $\F^n$ to $\F$.
It is known from the theory of finite fields that there is a unique such $f'$, and
that the individual degrees of all variables in $f'$ are smaller than $p$.
Moreover, given any polynomial $g\in\F[\bfx]$ agreeing with $f$ as a function from $\F^n$ to $\F$,
$f'$ can be derived from $g$ by replacing, for any $1\leq i \leq n$, occurrences of $x_i^t$
with $x_i^t \mod x_i^p-x_i$ (i.e., $x_i^{((t-1)\mod (p-1))+1}$ when $t\not=0$).
We do not prove these basic facts formally here. They essentially follow from the fact that
$a^p=a$ for $a\in \F$.

This motivates the following definition.
\begin{dfn}\label{dfn:degp}
Fix positive integers $n,d$ and a prime $p$.
Let $f\in\F[\bfx] = \F[x_1,\ldots,x_n]$.
 We define $\deg_p(f)$ to be the degree of the polynomial $f$ when
replacing, for $1\leq i \leq n$, $x_i^t$ by $(x_i^t\mod x_i^p-x_i)$. More formally,
$\deg_p(f)\triangleq \deg(f')$ where
\[f'(x_1,\ldots,x_n) \triangleq f(x_1,\ldots,x_n)\; \mathrm{mod} \;x_1^p-x_1,\;\ldots,\;\mathrm{mod}\; x_n^p-x_n.\]
Moreover, for a function $g:V \to \FF_p$ where $V \subseteq
\FF^n_p$ is a subspace of dimension $k$, we define
$\deg_p(g)=\min_f deg_p(f)$ where $f\in \FF_p [x_1,...,x_n]$ and
$f|_V=g$. Here $g$ can be regarded as a function in
$\FF_p[x_1,\ldots,x_k]$.
\end{dfn}

We note that this notion of degree is affine invariant, i.e does
change after affine transformations. In addition it has the
property that for any affine subspace $V$, $\deg_p(f|_V) \leq
\deg_p(f)$.

We now present the main result of this section.

\begin{lemma}\label{lem: low degree testing}
There is a randomized algorithm $A$ running in time
$\poly(n)\cdot p^{\left\lceil \frac{d}{p-1}\right\rceil +1}$
that determines with constant one-sided error whether a function $f$ of degree
at most $d$ has degree less than $d$.
More precisely, given black-box access to a function $f:\F^n\to \F$ with  $\deg_p(f)\leq d$,

\begin{itemize}
\item If $\deg_p(f)= d$, $A$ accepts with probability at least $99/100$.
\item If $\deg_p(f)<d$, $A$ rejects with probability one.
\end{itemize}
\end{lemma}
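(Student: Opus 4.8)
The plan is to realise $A$ as the random affine-subspace restriction test of \cite{HSS}, and then carry out the running-time analysis that \cite{HSS} does not. Set $\ell:=\lceil d/(p-1)\rceil+1$. One run of the test proceeds as follows: sample independent uniform $a,h_1,\ldots,h_\ell\in\F^n$, defining the affine map $\Phi(t_1,\ldots,t_\ell)=a+\sum_{i=1}^{\ell}t_i h_i$ from $\F^\ell$ to $\F^n$; query the black box at all $p^\ell$ points $\Phi(t)$, $t\in\F^\ell$, to obtain the value table of the restriction $g:=f\circ\Phi$, a function $\F^\ell\to\F$; compute $\deg_p(g)$ exactly; and \emph{accept} iff $\deg_p(g)=d$. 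Since composing with an affine map cannot raise degree, $\deg_p(g)\le\deg_p(f)\le d$ always (this is the property $\deg_p(f|_V)\le\deg_p(f)$ noted after Definition~\ref{dfn:degp}), so ``$\deg_p(g)=d$'' is the same as ``$\deg_p(g)\ge d$''. The algorithm $A$ runs this test $O(1)$ times independently and accepts iff some run does. One-sided soundness is then immediate: if $\deg_p(f)<d$ then $\deg_p(g)\le\deg_p(f)<d$ for \emph{every} $\Phi$, so $A$ rejects with probability one.

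For completeness assume $\deg_p(f)=d$; it suffices to show that a single run accepts with probability at least some absolute constant $c>0$, after which $O(1)$ independent repetitions push the overall acceptance probability above $99/100$. A single run rejects exactly when $\deg_p(g)<d$, i.e.\ when $g$ lies in $\mathrm{RM}(d-1,\ell)$, the Reed--Muller code of degree at most $d-1$ on $\F^\ell$; and when $\deg_p(g)=d$, the function $g$ is in fact at relative Hamming distance at least the minimum distance of $\mathrm{RM}(d,\ell)$ from $\mathrm{RM}(d-1,\ell)$, since subtracting any member of $\mathrm{RM}(d-1,\ell)$ from $g$ leaves a nonzero codeword of $\mathrm{RM}(d,\ell)$. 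Thus the test is a bona fide local tester with a promise gap, and the bound we need, namely $\Pr_\Phi[\deg_p(f\circ\Phi)=d]\ge c$ at subspace dimension $\ell=\lceil d/(p-1)\rceil+1$, is precisely what the restriction analysis of \cite{HSS} supplies. As a sanity check on the dimension: were $f$ the single monomial $x_1^{p-1}\cdots x_q^{p-1}$ with $q=d/(p-1)$ (take $(p-1)\mid d$), then $\deg_p(f\circ\Phi)=d$ whenever the linear forms $L_j(t):=\sum_i h_{i,j}t_i$ for $j=1,\ldots,q$ are linearly independent --- equivalently the vectors $(h_{1,j},\ldots,h_{\ell,j})$, $j=1,\ldots,q$, are independent in $\F^\ell$ --- which uses $q\le\ell$ and occurs with probability $\prod_{i\ge1}(1-p^{-i})>1/4$; the substantive work in \cite{HSS}, and the reason for the ``$+1$'' in $\ell$, is controlling cancellations among the degree-$d$ monomials of a general $f$.

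It remains to bound the running time, which is the part left to us. Sampling $\Phi$ and listing the $p^\ell$ points $\Phi(t)$ costs $O(n\cdot p^\ell)$ operations in $\F$ plus $p^\ell$ black-box queries. Given the value table of $g$, its reduced representation $g=\sum_{e\in\{0,\ldots,p-1\}^\ell}c_e\,t_1^{e_1}\cdots t_\ell^{e_\ell}$ is obtained by coordinatewise Lagrange interpolation at the $p$ points of $\F$: the linear map from value table to coefficient vector is the $\ell$-fold tensor power of a single invertible $p\times p$ matrix over $\F$, and applying a tensor-power map by the standard layered (``FFT-like'') procedure takes $O(\ell\cdot p\cdot p^\ell)$ operations in $\F$; then $\deg_p(g)=\max\{\,e_1+\cdots+e_\ell:\,c_e\ne0\,\}$ is read off in one $O(p^\ell)$ pass. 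Each $\F$-operation costs $\poly(\log p)$ time and the factors $n,\ell,p,\log p$ are all absorbed, so one run costs $\poly(n)\cdot p^{\ell}$; over the $O(1)$ repetitions the total is $\poly(n)\cdot p^{\lceil d/(p-1)\rceil+1}$, as claimed.

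I expect the main obstacle to be the completeness bound: extracting from \cite{HSS} the statement that at subspace dimension exactly $\lceil d/(p-1)\rceil+1$ an \emph{arbitrary} $f$ with $\deg_p(f)=d$ --- not merely a single monomial --- satisfies $\deg_p(f\circ\Phi)=d$ with absolute-constant probability; this is where the real low-degree-testing argument lives, and where the intermonomial cancellations have to be tamed. Everything else --- the probability-one soundness, the amplification, and the tensor-transform running-time accounting --- is routine bookkeeping.
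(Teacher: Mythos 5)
Your algorithm is the same one the paper uses --- restrict $f$ to a random low-dimensional affine subspace and check whether the degree survives --- but the proof as written has a genuine gap exactly where you flag it: the completeness bound. The statement you need, that for an \emph{arbitrary} $f$ with $\deg_p(f)=d$ a random restriction to dimension $\lceil d/(p-1)\rceil+1$ satisfies $\deg_p(f\circ\Phi)=d$ with absolute constant probability, is not something you can simply read off from \cite{HSS}: the main theorems there concern soundness of proximity testing to Reed--Muller codes, not degree preservation under random restriction. What \cite{HSS} does supply (its Theorem 1.5, quoted as Lemma~\ref{lem:HSS-step} in the paper) is a bound of $p^{t+1}$ on the number of \emph{hyperplanes} on which a degree-$t(p-1)$ function drops degree. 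Two further steps are then needed and are missing from your write-up. First, the reduction to the case $d=t(p-1)$: the hyperplane bound is stated only for degrees that are multiples of $p-1$ (your own sanity-check monomial also assumes $(p-1)\mid d$); the paper handles general $d$ by passing to $f'(x_0,\bx)=x_0^{a}f(\bx)$ with $a=t(p-1)-d$. Second, the induction from the hyperplane bound down to $t$- or $(t+1)$-dimensional subspaces (Corollary~\ref{cor: low deg tester: completeness}): choosing the subspace as a chain of random hyperplane restrictions, each step fails with probability at most $p^{t+1-m}$ in ambient dimension $m$, giving $\prod_{j}(1-p^{-j})$ overall. This induction is short but it is the actual content of the lemma, and your proposal replaces it with a citation to a result that does not literally exist in the cited form.

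Two smaller remarks. Your parameter choice differs from the paper's in a way that is correct but worth noting: you restrict to dimension $t+1$ and get constant success probability per run with $O(1)$ repetitions, whereas the paper restricts to dimension $t=\lceil d/(p-1)\rceil$, gets success probability only $\Omega(1/p)$ per run (that is what Corollary~\ref{cor: low deg tester: completeness} actually gives at dimension $t$), and repeats $O(p)$ times; both yield the same total cost $\poly(n)\cdot p^{t+1}$. And your degree computation on the subspace by full multivariate interpolation works but is heavier than necessary: the paper observes (Lemma~\ref{lem: not-so-low degree testing}) that for $g:\F^t\to\F$ one has $\deg_p(g)\ge t(p-1)$ iff $\sum_{\ba\in\F^t}g(\ba)\ne 0$, since the coefficient of $x_1^{p-1}\cdots x_t^{p-1}$ in the reduced form of $g$ equals $(-1)^t\sum_{\ba}g(\ba)$; this makes the per-run check a single $O(p^t)$ summation.
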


The result essentially follows from the work of Haramaty, Shpilka and Sudan \cite{HSS}.
A technicality is to analyze the precise running time, and not just the query complexity as in \cite{HSS}

Before proving Lemma \ref{lem: low degree testing}, we state some required preliminary lemmas.

\begin{lemma}\label{lem: not-so-low degree testing}
Suppose we have black-box access to a function $f:\F^t\to \F$.
Then we can determine in deterministic time $O(p^t)$ whether $deg_p(f)\geq (p-1)\cdot t$.
\end{lemma}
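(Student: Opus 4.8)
The plan is to exploit the fact that on the domain $\F^t$, the ``characteristic polynomial'' of the degree-$(p-1)t$ monomial behaves like an indicator of a single point. Concretely, consider the polynomial $M(\bfx) = \prod_{i=1}^t x_i^{p-1}$, which is the unique monomial of total degree $(p-1)t$ all of whose individual degrees are $\leq p-1$. I claim that
\[
\sum_{\bfa\in\F^t} f(\bfa)\prod_{i=1}^t a_i^{p-1}
\]
is nonzero in $\F$ if and only if the coefficient of $M$ in $f'$ (the reduced representative of $f$) is nonzero, i.e. if and only if $\deg_p(f) \geq (p-1)t$. Since we have black-box access to $f$, we can evaluate this sum by making $p^t$ queries, one per point of $\F^t$, and perform $O(p^t)$ field operations; this gives the claimed deterministic $O(p^t)$ running time. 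So the whole lemma reduces to verifying this single summation identity.

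First I would recall the standard fact from finite field theory that $\sum_{a\in\F} a^j$ equals $-1$ when $j$ is a positive multiple of $p-1$ and equals $0$ for all other $j$ with $0\le j \le (p-1)t$ (in particular for $j=0$, where the sum is $p\cdot 1 = 0$). Then, writing $f'(\bfx) = \sum_{\bfe} c_{\bfe}\,\bfx^{\bfe}$ where the sum is over exponent vectors $\bfe=(e_1,\dots,e_t)$ with each $0\le e_i\le p-1$ (this is exactly the reduced form from Definition \ref{dfn:degp}, and $f'$ agrees with $f$ on $\F^t$), I would compute
\[
\sum_{\bfa\in\F^t} f(\bfa)\prod_{i=1}^t a_i^{p-1}
= \sum_{\bfe} c_{\bfe}\sum_{\bfa\in\F^t}\prod_{i=1}^t a_i^{e_i+p-1}
= \sum_{\bfe} c_{\bfe}\prod_{i=1}^t\Bigl(\sum_{a\in\F} a^{e_i+p-1}\Bigr).
\]
For each $i$ the inner sum is $-1$ when $e_i+p-1$ is a positive multiple of $p-1$, which for $0\le e_i\le p-1$ forces $e_i = p-1$, and it is $0$ otherwise (when $1 \le e_i \le p-2$ the exponent $e_i+p-1$ lies strictly between two consecutive multiples of $p-1$, and when $e_i = 0$ the exponent is $p-1$, a positive multiple, giving $-1$ — wait, this needs care). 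Let me restate: for $0 \le e_i \le p-1$, the exponent $e_i + p - 1$ ranges over $[p-1, 2(p-1)]$, and within this range the only multiples of $p-1$ are $p-1$ itself (at $e_i=0$) and $2(p-1)$ (at $e_i=p-1$); both are positive, so the inner sum is $-1$ precisely when $e_i\in\{0,p-1\}$ and $0$ otherwise. This is slightly more delicate than I first indicated, so the cleaner route is to instead pair $f$ with $\prod_i (x_i^{p-1}-\text{something})$, or more simply to observe that $\prod_i a_i^{p-1} = \prod_i \mathbf{1}[a_i\neq 0]$ as an $\F$-valued function, hence $\sum_{\bfa} f(\bfa)\prod_i a_i^{p-1} = \sum_{\bfa:\,a_i\neq 0\ \forall i} f(\bfa)$, and then extract $c_{\bfe}$ for $\bfe=(p-1,\dots,p-1)$ by a clean inclusion–exclusion / Fourier-type argument over $(\F^*)^t$.

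The main obstacle is exactly this bookkeeping: getting a summation functional that isolates the top monomial $\prod x_i^{p-1}$ cleanly, without interference from lower monomials such as those with some $e_i=0$. The fix is to use the functional $\bfa\mapsto \prod_{i=1}^t(a_i^{p-1} + \text{correction})$ tuned so that $\sum_{a\in\F}(a^{p-1}+\text{correction})\,a^{e}$ vanishes for every $0\le e\le p-2$ and is nonzero for $e=p-1$; equivalently, one uses the fact that the functions $\{a\mapsto a^e : 0\le e\le p-1\}$ form a basis for functions $\F\to\F$ and takes the dual functional to $a^{p-1}$. Once that one-variable functional is fixed, tensoring it over the $t$ coordinates and expanding against $f' = \sum_{\bfe} c_{\bfe}\bfx^{\bfe}$ immediately yields the coefficient $c_{(p-1,\dots,p-1)}$, which is nonzero iff $\deg_p(f)\ge(p-1)t$. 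The query and time bounds $O(p^t)$ are then immediate since the functional is a fixed $\F$-linear combination of the $p^t$ values $\{f(\bfa)\}_{\bfa\in\F^t}$.
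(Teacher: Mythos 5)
Your proposal is correct and, after your mid-course self-correction, it lands on essentially the same test as the paper, reached by a different justification. Your first concrete functional, $\sum_{\ba\in\F^t} f(\ba)\prod_{i=1}^t a_i^{p-1}$, is indeed wrong for exactly the reason you spot: since $\sum_{a\in\F}a^{e+p-1}$ equals $-1$ for $e=0$ as well as for $e=p-1$, it picks up every reduced monomial with exponent vector in $\{0,p-1\}^t$, not just the top one. The fix you describe --- take the dual functional to $a^{p-1}$ with respect to the basis $\{1,a,\dots,a^{p-1}\}$ of functions $\F\to\F$ and tensor it over the $t$ coordinates --- is sound: such a functional exists and is computable, tensoring it extracts the coefficient of $\prod_i x_i^{p-1}$ in the reduced form $f'$, and that coefficient is nonzero iff $\deg_p(f)\ge(p-1)t$ because $\prod_i x_i^{p-1}$ is the only reduced monomial of degree $\ge(p-1)t$, as you note. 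What you seem not to have noticed is that this dual functional is simply the constant weight $w(a)=-1$: by your own orthogonality relations, $\sum_{a\in\F}a^{e}$ is $0$ for $0\le e\le p-2$ (including $e=0$, where the sum is $p=0$) and $-1$ for $e=p-1$. So your corrected test is literally the paper's test, namely checking whether $\sum_{\ba\in\F^t}f(\ba)\ne 0$, whose value is $(-1)^t$ times the top coefficient. The paper verifies this by expanding $f'=\sum_{\ba}f(\ba)g_{\ba}$ over point indicators $g_{\ba}$ and reading off the coefficient of $\prod_i x_i^{p-1}$ in each $g_{\ba}$; your route via power-sum orthogonality and dual bases is an equivalent and equally valid verification. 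If you rewrite the argument, start from the uniform-sum functional directly and drop the $\prod_i a_i^{p-1}$ detour.
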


\begin{proof}
Consider the algorithm that yields a positive answer if and only if
$\sum_{\ba\in\FF_{p}^{t}}f(\ba)=0$ .
It is clear that the running time is indeed $O(p^{t})$.
Let us now show correctness.
As in Definition \ref{dfn:degp}, define
\[f'(x_1,\ldots,x_n) \triangleq f(x_1,\ldots,x_n)\; \mathrm{mod} \;x_1^p-x_1,\;\ldots,\;\mathrm{mod}\; x_n^p-x_n,\]
so that $\deg_p(f) = \deg(f')$.
We show that

\begin{enumerate}
\item The only monomial of degree
$\geq t(p-1)$ in $f'$  is $M_{\max}\triangleq \prod_{i=1}^{t}x_{i}^{p-1}$ and
\item  the coefficient of $M_{\max}$ in $f'$, is  $(-1)^{t}\cdot \sum_{\ba\in\F^t}f(\ba)$.
\end{enumerate}
~From these two items, it is clear that indeed $\deg_p(f)=\deg(f')
\geq t\cdot (p-1)$ if and only if $ \sum_{\ba\in\F^t}f(\ba)\neq
0$.

The first item is obvious, as the individual degrees in $f'$ are at most $p-1$.

 For the second item, let us calculate the coefficient of $M_{\max}$ in $f'$.
 For every $\ba\in \F^t$, consider the
function $g_{\ba}:\F^t\to \F$ that is one on $\ba$ and zero elsewhere. One can
verify that $g_{\ba}(\bx)=\prod_{i=1}^{t}\frac{\prod_{\alpha\in\F\backslash\{\ba_i\}}(\bx_{i}-\alpha)}{\prod_{\beta\in\F\backslash\{0\}}\beta}$.
Clearly, the coefficient of $M_{\max}$ in $g_{\ba}$ is $(\prod_{\beta\in\F\backslash\{0\}}\beta)^{-t}=(-1)^t$.
Note that in $g_{\ba}$, all individual degrees are smaller than $p$.
Hence, $f'=\sum_{\ba\in\F^{t}}f(\ba)\cdot g_{\ba}$ and the coefficient of
$M_{\max}$ in $f'$ is $(-1)^{t}\cdot \sum_{\ba \in \F^t} f(\ba)$.
\end{proof}

The algorithm for Lemma~\ref{lem: low degree testing} checks the
degree of the function only on a small subspace. The key for its
correctness is to show that when you restrict the function to a
subspace (even for $n-1$ dimensional subspace) the degree does not
decrease with high probability. The Lemma appeared in \cite{HSS}.
We give a proof sketch here for completeness

\ignore{ We state two
lemmas from \cite{HSS}.
\begin{lemma}
Let $f$ be a degree $t(p-1)$ polynomial. Then there is an
affine transformation $A$ s.t $f\circ A$ contain a monomial $\prod_{i=1}^{t}x_{i}^{p-1}$.
\end{lemma}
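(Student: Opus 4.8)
The plan is to produce the required $A$ as a linear substitution of the $n$ variables by $t$ new ones, chosen so that it makes $y_1^{p-1}\cdots y_t^{p-1}$ appear, and then to extend that substitution to an invertible affine map of $\F^n$. Throughout, ``degree'' means $\deg_p$ as in Section~\ref{sec:ldtest}, and we may replace $f$ by its reduced form; so assume $f$ has all individual degrees $\le p-1$ and $\deg(f)=d:=t(p-1)$ (in particular $n\ge t$). Let $H$ be the homogeneous degree-$d$ part of $f$; since $\deg(f)=d$ we have $H\ne 0$, and every monomial of $H$ has total degree $d$ and individual degrees $\le p-1$. Introduce formal scalars $a_{j,i}$ for $j\in[n]$, $i\in[t]$, and substitute $x_j\mapsto\sum_{i=1}^t a_{j,i}y_i$ in $f$. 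The first observation is that the coefficient of $y_1^{p-1}\cdots y_t^{p-1}$ in the reduced form of the result equals its coefficient $Q\big((a_{j,i})\big)$ in $H\big(\sum_i a_{1,i}y_i,\ldots,\sum_i a_{n,i}y_i\big)$: the substitution is linear, so this expansion of $H$ is exactly the degree-$d$ homogeneous part; reducing modulo the $y_i^p-y_i$ strictly lowers the degree of any monomial that has some exponent $\ge p$ and fixes the others, and $y_1^{p-1}\cdots y_t^{p-1}$ has degree $d$ with all exponents $<p$, so its coefficient survives reduction untouched and receives nothing from the lower-degree part of $f$. Thus it suffices to exhibit scalars $\bar a_{j,i}\in\F$ with $Q(\bar a)\ne 0$.

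To find such $\bar a$ I would expand $Q$ by the multinomial theorem. For a monomial $M=\prod_j x_j^{e_j}$ of $H$ with coefficient $c_M$, its contribution to $Q$ is $c_M\sum_{(\alpha^{(j)})}\big(\prod_j\binom{e_j}{\alpha^{(j)}}\big)\prod_{j,i}a_{j,i}^{\alpha^{(j)}_i}$, the sum being over tuples of nonnegative integer vectors $\alpha^{(j)}$ with $|\alpha^{(j)}|=e_j$ and $\sum_j\alpha^{(j)}=(p-1,\ldots,p-1)$. The crucial point is that an $a$-monomial $\prod_{j,i}a_{j,i}^{\gamma_{j,i}}$ already determines $e_j=\sum_i\gamma_{j,i}$ for every $j$, hence determines $M$; so distinct monomials of $H$ feed disjoint sets of $a$-monomials and no cancellation occurs — the coefficient of $\prod_{j,i}a_{j,i}^{\gamma_{j,i}}$ in $Q$ is simply $c_M\prod_j\binom{e_j}{\gamma_{j,\bullet}}$ for that $M$, provided $\sum_j\gamma_{j,i}=p-1$ for all $i$ (otherwise it is $0$). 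Now fix any $M_0=\prod_j x_j^{e_j}$ of $H$ with $c_{M_0}\ne 0$. Since $\sum_j e_j=d=\sum_{i=1}^t(p-1)$ and $0\le e_j\le p-1$, there is a nonnegative integer $n\times t$ matrix $\gamma$ with row sums $(e_j)_j$ and all column sums equal to $p-1$ (the transportation polytope with these consistent integer margins contains an integer point — e.g.\ the northwest-corner rule builds one). For this $\gamma$ every entry is $\le p-1$, so each multinomial coefficient $\binom{e_j}{\gamma_{j,\bullet}}=e_j!/\prod_i\gamma_{j,i}!$ is an integer with no factor of $p$, i.e.\ nonzero in $\F$; hence the coefficient of $\prod_{j,i}a_{j,i}^{\gamma_{j,i}}$ in $Q$ is $c_{M_0}\prod_j\binom{e_j}{\gamma_{j,\bullet}}\ne 0$, so $Q$ is a nonzero polynomial. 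Finally, each $a_{j,i}$ occurs in $Q$ to degree at most $\max_M e_j(M)\le p-1$, so by the standard finite-field fact recalled in Section~\ref{sec:ldtest} $Q$ is not the identically-zero function, and some $\bar a\in\F^{nt}$ has $Q(\bar a)\ne 0$.

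It remains to package this as an affine transformation. Let $\iota:\F^t\to\F^n$ be the linear map $y\mapsto\big(\sum_i\bar a_{j,i}y_i\big)_{j}$. By the first paragraph, $y_1^{p-1}\cdots y_t^{p-1}$ appears with coefficient $Q(\bar a)\ne 0$ in the reduced form of $f\circ\iota$, so $\deg_p(f\circ\iota)=t(p-1)$; this forces $\iota$ to have rank $t$, since a linear map of rank $s<t$ would make $f\circ\iota$ a function of $s$ variables and hence of $\deg_p\le s(p-1)<t(p-1)$. Extend $\iota$ to a linear bijection $A:\F^n\to\F^n$ with $A(x_1,\ldots,x_t,0,\ldots,0)=\iota(x_1,\ldots,x_t)$ (pick a basis of a complement of $\iota(\F^t)$ and send the remaining coordinates onto it). Setting $x_{t+1}=\cdots=x_n=0$ leaves the coefficient of any monomial in $x_1,\ldots,x_t$ unchanged and commutes with reduction, so $x_1^{p-1}\cdots x_t^{p-1}$ occurs with coefficient $Q(\bar a)\ne 0$ in (the reduced form of) $f\circ A$, as required; note also that $\deg_p(f\circ A)=\deg_p(f)=t(p-1)$, so this monomial is in fact the unique top one.

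The substantive part of the argument is the second paragraph: the no-cancellation observation (an $a$-monomial pins down the monomial of $H$ it came from), together with the elementary fact that a nonnegative integer matrix with given matching margins exists; once these are in place the relevant multinomial coefficients are automatically units in $\F$ because every exponent in sight is at most $p-1$. The points I would be most careful about are the finite-field bookkeeping — that reduction modulo the $x_i^p-x_i$ does not disturb the coefficient of the top monomial $\prod y_i^{p-1}$, and the passage from ``$Q$ is a nonzero polynomial'' to ``$Q$ has a non-root in $\F^{nt}$'', which is exactly where we use that $Q$ has all individual degrees below $p$. (One could instead hope to argue that a random linear $\iota$ preserves $\deg_p$ and then invoke Lemma~\ref{lem: not-so-low degree testing}, but that preservation statement is itself what one usually derives from the present lemma, so the direct route above is preferable.) This argument is essentially that of \cite{HSS}.
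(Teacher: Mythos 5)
Your argument is correct, but it is genuinely different from the proof the paper gives. The paper proceeds by a greedy induction on the lexicographic order of the top-degree monomials: it fixes the lex-maximal monomial $M_f$ of degree $t(p-1)$, and if $M_f\neq\prod_{i=1}^t x_i^{p-1}$ it finds an elementary shear $x_m\mapsto x_m+\alpha x_k$ that strictly increases $M_f$ in the lex order (the key points there being that the binomial coefficient $\binom{a_m}{1}=a_m$ is a unit mod $p$, and that the relevant coefficient, viewed as a polynomial in $\alpha$ of degree $<p$, has a non-root in $\F$); iterating these shears reaches $M_{\max}$. You instead do everything in one shot: substitute a generic rank-$t$ linear collapse $x_j\mapsto\sum_i a_{j,i}y_i$, show the coefficient $Q$ of $y_1^{p-1}\cdots y_t^{p-1}$ is a nonzero polynomial in the $a_{j,i}$ via the no-cancellation observation (the exponent matrix of an $a$-monomial determines its source monomial of $H$) together with a transportation matrix with margins $(e_j)$ and $(p-1,\ldots,p-1)$ whose multinomial coefficients are units because all $e_j\le p-1$, and then specialize using that $Q$ has individual degrees below $p$, finally extending the rank-$t$ map to an invertible linear map of $\F^n$. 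The paper's route is more elementary and step-local but needs the termination/monotonicity bookkeeping of the lex order; yours avoids the induction entirely, makes explicit exactly where the characteristic enters (no factor of $p$ in the multinomial coefficients since every exponent is at most $p-1$), shows that a generic rank-$t$ linear restriction already works and that $\prod_{i\le t}x_i^{p-1}$ is then the unique top monomial, and delivers invertibility of $A$ explicitly — all of which is convenient for the way the lemma is used in Lemma~\ref{lem:HSS-step}. Both proofs in fact only need linear, not affine, transformations.
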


\begin{proof}
 For two different degree $t(p-1)$ monomials $M_{1}=\prod_{i=1}^{n}x_{i}^{a_{i}}$
and $M_{2}=\prod_{i=1}^{n}x_{i}^{b_{i}}$ we say that $M_{1}$ is
lexicographic greater then $M_{2}$ if for the minimal index $k$
such that $a_{k}\neq b_{k}$ it is the case where $a_{k}>b_{k}$.
Clearly, the maximal monomial among them is $M_{\max}\triangleq\prod_{i=1}^{t}x_{i}^{p-1}$.
We will denote by $M_{f}$ the maximal monomial of $f$ of degree
$t(p-1)$ according to this order. We now show that if $M_{f}\neq M_{\max}$
then there is an affine transformation $A_{0}$ such that $M_{f\circ A_{0}}$
is lexicographic higher then $M_{f}$. By repeating this argument
one can verify that there is an affine transformation $A=A_{0}A_{1}....A_{m}$
such that $M_{f\circ A}=M_{\max}$.

Assume that $M_{f}\neq M_{\max}$ and let $k\in[t]$ be the first
index satisfy $a_{k}\neq p-1$. because the total degree is $t(p-1)$
then there is an index $m>t$ such that $a_{m}>0$. We show the existing
of such an $A_{0}$ by showing that there is an $\alpha\in\FF_{p}$
s.t $f\left(x_{1},...,x_{m}+\alpha x_{k},...,x_{n}\right)$ contain
a monomial that is lexicographic grater then $M_{f}=\prod_{i=1}^{n}x_{i}^{a_{i}}$.

Consider the function $f'(x_{1},...,x_{n},z)=f\left(x_{1},...,x_{m}+zx_{k},...,x_{n}\right)$.
Observe:
\[
M_{f}(x_{1},...,x_{m}+zx_{k},...,x_{n})=\sum_{b=0}^{a_{m}}{a_{m} \choose b}\prod_{i\neq k,m}^{n}x_{i}^{a_{i}}x_{k}^{a_{k}+b}x_{m}^{a_{m}-b}z^{b}\;.
\]
Moreover the monomial $\prod_{i\neq k,m}^{n}x_{i}^{a_{i}}x_{k}^{a_{k}+b}x_{m}^{a_{m}-b}z^{b}$
in $f'$ can generated only from $M_{f}$ and no other monomial of
$f$. so they cannot be canceled in $f'$. Specifically, for $b=1$
we have the monomial $\prod_{i\neq k,m}^{n}x_{i}^{a_{i}}x_{k}^{a_{k}+1}x_{m}^{a_{m}-1}z$
in $f$ (is coefficient multiply by ${a_{m} \choose 1}=a_{m}\not\equiv_{p}0$.

Now consider $f'$ as polynomial over $x_{1},...,x_{n}$ with coefficient
in $\FF_{p}[z]$. We saw that the coefficient of the monomial $\prod_{i\neq k,m}^{n}x_{i}^{a_{i}}x_{k}^{a_{k}+1}x_{m}^{a_{m}-1}$
is depend on $z$ and hence not the zero polynomial. So there is some
$\alpha\in\FF_{p}$ such that it is not zero. For this alpha the function
$f'(x_{1},...,x_{n},\alpha)=f(x_{1},...,x_{m}+\alpha x_{k},...,x_{n})$
has the monomial $\prod_{i\neq k,m}^{n}x_{i}^{a_{i}}x_{k}^{a_{k}+1}x_{m}^{a_{m}-1}$
which is lexicographic greater then $M_{f}$ and we are done.
\end{proof}
}
\begin{lemma}[Theorem 1.5 in \cite{HSS}]\label{lem:HSS-step}
Let $\FF_{p}$ be a field of prime size $p$ and $f:\F^n\to\F$ be a
function with  $\deg_p(f) = t(p-1)$. The number of
hyperplanes $H$ such that $\deg_p(f|_{H})<t(p-1)$ is at most
$p^{t+1}$\end{lemma}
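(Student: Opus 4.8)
The plan follows the approach of \cite{HSS}, reducing the count in three steps to a divisibility statement about a single polynomial of degree $p-1$.

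\emph{Reduction to a homogeneous polynomial.} Replace $f$ by $f' = f \bmod (x_1^p - x_1,\dots,x_n^p-x_n)$, so $\deg f' = t(p-1)$, and let $F$ be the homogeneous degree-$t(p-1)$ part of $f'$; then $F \neq 0$ and all individual degrees of $F$ are $\le p-1$. Restricting $f$ to a hyperplane $H$ is an affine substitution, which does not raise degree, and the degree-$t(p-1)$ homogeneous part of $F|_H$ is exactly the top part of $f|_H$; hence $\deg_p(f|_H) < t(p-1)$ iff $\deg_p(F|_H) < t(p-1)$, and it suffices to bound the number of hyperplanes bad for the homogeneous $F$.

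\emph{Badness depends only on the direction.} Writing a $t$-flat as $W = v + \Span(u_1,\dots,u_t)$, the sum $\sum_{\ba\in W}F(\ba) = \sum_{c\in\F^t}F(v+\sum_i c_iu_i)$ is independent of the offset $v$: expanding $F$ as a polynomial in $c$, only the monomial $c_1^{p-1}\cdots c_t^{p-1}$ survives $\sum_{c\in\F^t}(\cdot)$ (since $\sum_{c_i\in\F}c_i^{k_i}$ vanishes unless $(p-1)\mid k_i$ and $k_i>0$, and the total $c$-degree is at most $t(p-1)$), and by homogeneity of $F$ these top-degree terms involve no $v$; in fact $\sum_{\ba\in W}F(\ba) = (-1)^t$ times the coefficient of $\prod_i c_i^{p-1}$ in $F(\sum_i c_iu_i)$. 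Combining this with Lemma~\ref{lem: not-so-low degree testing} applied to $F|_W$, and with the structural fact of \cite{HSS} that $\deg_p(g)=t(p-1)$ forces some $t$-flat on which $g$ keeps full degree, we get: $\deg_p(F|_H) < t(p-1)$ iff $\sum_{\ba\in W}F(\ba) = 0$ for every $t$-flat $W\subseteq H$, and this condition depends only on the linear part $\vec H$ of $H$. As each $(n-1)$-dimensional linear subspace is the linear part of exactly $p$ hyperplanes, it now suffices to show that at most $p^t$ of the $(n-1)$-dimensional linear subspaces $\vec H = \ker a$ satisfy $\deg_p(F|_{\vec H}) < t(p-1)$.

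\emph{Reduction to a degree-$(p-1)$ divisibility, and the count.} By the structural lemma of \cite{HSS} — which uses only linear coordinate changes, so $F$ stays homogeneous, and which only permutes the subspaces $\vec H$ while preserving $\deg_p$ — we may assume $F$ contains $M_{\max} = x_1^{p-1}\cdots x_t^{p-1}$ with a nonzero coefficient $c$. Setting $x_{t+1} = \cdots = x_n = 0$ in $F$ leaves $cM_{\max}$ intact (no other monomial can cancel it), so if $a_1 = \cdots = a_t = 0$ then $\vec H = \ker a$ contains $\{x_{t+1}=\cdots=x_n=0\}$ and is not bad; hence a bad $\vec H$ has $(a_1,\dots,a_t)\neq 0$. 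Stratifying by $i_0 := \min\{i\le t : a_i\neq 0\}$ and applying the transposition of $x_1$ with $x_{i_0}$ (which fixes $M_{\max}$), it is enough to count, for each $i_0$, the bad $\vec H$ with $a_1 = 1$, $a_2 = \cdots = a_{i_0} = 0$, and $a_{i_0+1},\dots,a_t$ arbitrary. Parametrize $\vec H = \ker a$ by $x_2,\dots,x_n$ via $x_1 = -\sum_{j\ge 2}a_jx_j$, so $F|_{\vec H}$ is homogeneous of degree $t(p-1)$; let $R_a$ be the coefficient of $x_2^{p-1}\cdots x_t^{p-1}$ in it, a homogeneous degree-$(p-1)$ polynomial in $x_{t+1},\dots,x_n$. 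If $\deg_p(F|_{\vec H}) < t(p-1)$ then in $F|_{\vec H}$ the coefficient of every degree-$t(p-1)$ monomial with individual degrees $\le p-1$ is zero; applied to the monomials $x_2^{p-1}\cdots x_t^{p-1}\mu$ this forces $R_a \equiv 0$. Introducing $z := x_1 + \sum_{j=2}^t a_jx_j$, so that $\langle a,x\rangle = z + \sum_{j>t}a_jx_j$, one computes $R_a = \widehat F\bigl(-\sum_{j>t}a_jx_j,\ x_{t+1},\dots,x_n\bigr)$, where $\widehat F(z,x_{t+1},\dots,x_n)$ is the coefficient of $x_2^{p-1}\cdots x_t^{p-1}$ in $F\bigl(z-\sum_{j=2}^t a_jx_j,\ x_2,\dots,x_n\bigr)$ — a polynomial depending only on $a_2,\dots,a_t$, homogeneous of degree $p-1$, whose coefficient of $z^{p-1}$ is exactly $c \neq 0$. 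Thus $R_a\equiv 0$ is equivalent to the linear form $z + \sum_{j>t}a_jx_j$ dividing $\widehat F$, and since $\widehat F$ has degree $p-1$ in $z$, at most $p-1$ distinct linear forms $z + (\text{form in }x_{t+1},\dots,x_n)$ can divide it. So, for each fixed $(a_2,\dots,a_t)$, at most $p-1$ vectors $(a_{t+1},\dots,a_n)$ give a bad $\vec H$; this bounds the number of bad $\vec H$ in stratum $i_0$ by $p^{t-i_0}(p-1)$, and $\sum_{i_0=1}^t p^{t-i_0}(p-1) = p^t - 1$. Hence there are fewer than $p^t$ bad $\vec H$, and therefore fewer than $p^{t+1}$ bad hyperplanes.

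\emph{Main difficulty.} The crux is the second step — seeing that ``the degree drops on $H$'' is a property of the direction $\vec H$ alone, so that only $\approx p^{n-1}$ objects (not $\approx p^n$) need be controlled — together with the coefficient extraction in the third step, which compresses the many vanishing conditions defining a bad $\vec H$ into one degree-$(p-1)$ divisibility. For $t=1$ this is transparent: $\ker a$ is bad exactly when the linear form $\langle a,\cdot\rangle$ divides the homogeneous degree-$(p-1)$ polynomial $F$, of which there are at most $\deg F = p-1$, giving at most $(p-1)p < p^2$ bad hyperplanes.
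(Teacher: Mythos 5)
Your proof is correct and follows essentially the same route as the paper's sketch (and the underlying argument of \cite{HSS}): reduce to the case where the top homogeneous part contains $x_1^{p-1}\cdots x_t^{p-1}$, fix all but the coefficients $a_{t+1},\dots,a_n$ of the hyperplane, and bound the bad choices of those by the at most $p-1$ roots of a degree-$(p-1)$ polynomial in one variable over $\FF_p(x_{t+1},\dots,x_n)$. You are in fact more complete than the paper's sketch, which explicitly waves away the hyperplanes not depending on $x_1$ and the nonzero-offset case --- gaps that your stratification by the first nonzero $a_i$ among $a_1,\dots,a_t$ and your observation that badness depends only on the linear part of $H$ fill in.
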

\begin{proof}[Proof sketch]
We will assume w.l.o.g that $f$ has the monomial
$\prod_{i=1}^{t}x_{i}^{p-1}$. One can show that for any degree
$t(p-1)$ polynomial $f$ there is linear transformation $A$ such
that $f(Ax)$ has the monomial $\prod_{i=1}^{t}x_{i}^{p-1}$. So it
will be enough to prove the lemma for the suitable transformation
of $f$.

We will assume for simplicity that all the hyperplanes are of the
form of $H_{\alpha}=\{x\mid
x_{1}=\sum_{i=2}^{n}\alpha_{i}x_{i}+\alpha_{0}\}$ for some
$\alpha_{2},\ldots,\alpha_{n}$. Indeed, there are few more
hyperplanes that does not depend on the first coordinate, but they
don't contribute much to the upper bound.

To prove the lemma we will show that for any of the $p^{t}$
possible values for $\alpha_{2},\ldots,\alpha_{t},\alpha_{0}$
there are $<p$ possibilities for $\alpha_{t+1},\ldots,\alpha_{n}$
such that $\deg(f|_{H_{\alpha}})<t(p-1)$. Fix
$\alpha_{2},\ldots,\alpha_{t},\alpha_{0}$. For simplicity we
assume they are all zero, but the same bound goes for any
$\alpha_{2},\ldots,\alpha_{t},\alpha_{0}$ (one can reduce the
general case to the zero case by some affine transformation).

Now consider all the monomials $M$ in $f$ with the following
properties: (1) $M$ divides $\prod_{i=2}^{t}x_{i}^{p-1}$ and (2)
$\deg(M)=t(p-1)$. We can write the sum of all those monomials as $
$$\prod_{i=2}^{t}x_{i}^{p-1}g(x_{1},x_{t+1},\ldots,x_{n})$. By
definition, $g$ is homogenous polynomial of degree $p-1$. Because
$\prod_{i=1}^{t}x_{i}^{p-1}$ is a monomial of $f$, $x_{1}^{p-1}$
is a monomial of $g$.

Because the hyperplanes does not depend on the variables
$x_{2},\ldots,x_{t}$ (recall, we assumed $\alpha_2 = \cdots =
\alpha_t = 0$) the degree of $f$ can decrease on $H_{\alpha}$ only
if the degree of $g$ decrees on $H_{\alpha}$. Because $g$ is
homogenous of degree $p-1$ and we consider only linear hyperplanes
of the form $x_{1}=L(x_{t+1},\ldots,x_{n})$, then $g|_{x_{1}=L}$
is still homogenous of degree $p-1$, so if the degree
$\deg(g|_{x_{1}=L})<p-1$ then $g|_{x_{1}=L}\equiv0$. Now consider
$g$ as an univariate polynomial in $x_{1}$ over the field of
rational functions in $x_{t+1},\ldots,x_{n}$. In this view our
question is: how many field elements
$L\in\FF_{p}(x_{t+1},\ldots,x_{n})$ are there such that $g(L)=0$.
From the fundamental theorem of the algebra  the answer is $p-1$
and we are done.
\end{proof}

~From Lemma~\ref{lem:HSS-step} we get the following corollary.
\begin{corollary}\label{cor: low deg tester: completeness}
Let $n>t$ and $f:\FF_p^n \to \FF_p$  be a polynomial such that
$\deg_p(f)=t(p-1)$. Then
$\Pr_{V}\left[\deg_p(f|_{V})=t(p-1)\right]\geq\frac{1}{p+1}\prod_{k=1}^{n-t-1}\left(1-p^{-k}\right)=\Omega(\frac{1}{p})$,
where $V$ is a random $t$-dimensional affine subspace.
\end{corollary}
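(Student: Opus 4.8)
The plan is to sample a random $t$-dimensional affine subspace $V$ as the image of a random affine map, and to build $V$ up one dimension at a time, applying Lemma~\ref{lem:HSS-step} at each stage to control the probability that the degree drops. Concretely, I would realize a uniformly random affine subspace of dimension $t$ by first picking a uniformly random point $a_0\in\FF_p^n$, then picking vectors $a_1,\ldots,a_{n-t}$ one at a time so that $V$ is cut out as the intersection of $n-t$ affine hyperplanes $H_1,\ldots,H_{n-t}$ in "general position." Equivalently — and this is the cleaner bookkeeping — I would go the other direction: start from $\FF_p^n$ and intersect with a random hyperplane, then a random hyperplane inside that, and so on, $n-t$ times, ending at a random $t$-dimensional affine subspace. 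Writing $W_0=\FF_p^n\supseteq W_1\supseteq\cdots\supseteq W_{n-t}=V$ with $\dim W_j = n-j$, the degree $\deg_p(f|_{W_0})=t(p-1)$ by hypothesis, and I want to bound the probability that it stays equal to $t(p-1)$ all the way down.

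The key step is the inductive estimate: condition on $W_j$ with $\deg_p(f|_{W_j})=t(p-1)$, where $W_j$ has dimension $n-j$; then $W_{j+1}$ is a uniformly random hyperplane inside $W_j$ (a space of dimension $n-j$, hence with $\approx p^{\,n-j-1}$ hyperplanes, more precisely $\frac{p^{n-j}-1}{p-1}$ of them through affine shifts, but the dominant term is $p^{n-j-1}$). By Lemma~\ref{lem:HSS-step} applied to $f|_{W_j}$ as a function on a space of dimension $n-j$, the number of "bad" hyperplanes $H$ with $\deg_p(f|_{W_j\cap H})<t(p-1)$ is at most $p^{t+1}$. So the conditional probability of a bad step is at most $p^{t+1}/p^{\,n-j-1} = p^{-(n-j-t-2)}$, and the conditional probability of a good step is at least $1-p^{-(n-j-t-2)}$. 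Multiplying over $j=0,1,\ldots,n-t-1$ and reindexing by $k=n-j-t-1$ (so $k$ runs from $n-t-1$ down to $1$ for the middle steps, with the last one or two steps — where $n-j-t-1$ is small — handled separately, contributing the leading constant factor like $\frac{1}{p+1}$), I get
\[
\Pr_V\!\left[\deg_p(f|_V)=t(p-1)\right] \;\geq\; \frac{1}{p+1}\prod_{k=1}^{n-t-1}\bigl(1-p^{-k}\bigr).
\]
Finally, the infinite product $\prod_{k\geq 1}(1-p^{-k})$ converges to a positive constant bounded below by $1-\sum_{k\geq1}p^{-k}=1-\tfrac{1}{p-1}$ (and for $p=2$ one uses the standard fact that $\prod_{k\geq1}(1-2^{-k})\approx 0.288>0$), so the whole expression is $\Omega(1/p)$.

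The main obstacle is not conceptual but a matter of getting the hyperplane-counting bookkeeping exactly right: Lemma~\ref{lem:HSS-step} is stated for hyperplanes in $\FF_p^n$, and I am applying it inside a lower-dimensional affine subspace $W_j$, so I must be careful that (i) restricting $f$ to $W_j$ gives a genuine function on an $(n-j)$-dimensional space to which the lemma applies with the same $t$, (ii) the count of hyperplanes of $W_j$ is the right denominator (the $+1$ versus no $+1$ in $p+1$ comes from being slightly lossy on the last couple of steps where the ratio $p^{t+1}/p^{\dim-1}$ is not $<1$, so those steps are simply bounded below by a crude constant rather than by $1-p^{-k}$), and (iii) the edge case where $\dim W_j - t$ is small, so that the bound $1-p^{-(n-j-t-2)}$ would be vacuous or negative — those terms get absorbed into the $\frac{1}{p+1}$ prefactor. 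Once the indices are aligned, the convergence of $\prod(1-p^{-k})$ and the $\Omega(1/p)$ conclusion are immediate.
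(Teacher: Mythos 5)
Your approach is the same as the paper's: the paper proves the corollary by induction on $n$, peeling off one random hyperplane at a time and invoking Lemma~\ref{lem:HSS-step} at each stage, with the base case $n=t+1$ supplying the $\frac{1}{p+1}$ factor --- exactly your chain $W_0\supseteq W_1\supseteq\cdots\supseteq W_{n-t}=V$. The one slip is your hyperplane count: an affine space of dimension $m$ has about $p^m$ affine hyperplanes (precisely $\frac{p(p^m-1)}{p-1}>p^m$), not $p^{m-1}$; your figure $\frac{p^{n-j}-1}{p-1}$ counts only the linear hyperplanes through the origin, whereas Lemma~\ref{lem:HSS-step} bounds the number of bad \emph{affine} hyperplanes. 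With the corrected denominator the bad-step probability at dimension $n-j$ is at most $p^{t+1-(n-j)}$, so the product indices align exactly with the stated bound and only the single final step (dimension $t+1$ down to $t$) needs the exact count $\frac{p^{t+2}-p}{p-1}>p^{t+1}+p^t$ to give the $\frac{1}{p+1}$ prefactor; with your undercount two steps become vacuous and the stated constant would not quite come out. This is a bookkeeping fix, not a conceptual gap, and your $\Omega(1/p)$ conclusion (including the separate treatment of $p=2$ for the convergent product) is fine.
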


\begin{proof}
We proceed by induction on $n$. Consider first the base case,
where $n=t+1$. In this case the number of $t$-dimensional affine
subspaces $V \subseteq \FF^{t+1}_p$ is $\frac{p^{t+2} -1}{p-1} >
p^{t+1} + p^{t}$. By Lemma~\ref{lem:HSS-step} on at most $p^{t+1}$
of them $\deg(f|_V) <t(p-1)$ so the probability that $\deg(f|_V) =
t(p-1)$ is $\frac{1}{p+1}$ as
 required.

Now assume the claim is true for $n-1$, and consider the following
way of choosing a random $t$-dimensional affine subspace $V$.
First choose a random hyperplane $H\subseteq \FF_p^n$ and then
choose a random $t$-dimensional affine subspace $V\subseteq H$.
There are more than $p^n$ hyperplanes $H\subseteq \FF_p^n$, so by
Lemma~\ref{lem:HSS-step} the probability that
$\deg_p(f|_H)=t(p-1)$ is at least  $1-p^{t+1-n}$. Moreover, in the
event that $\deg_p(f|_H)=t(p-1)$, we can apply the induction
hypothesis to $f|_H$. Hence,
\begin{eqnarray*}
\Pr\left[\deg_p(f|_{V})=t(p-1)\right] & = & \Pr\left[\deg_p((f|_{H})|_{V})=t(p-1)\mid\deg_p(f|_{H})=t(p-1)\right]\cdot\Pr\left[\deg_p(f|_{H})=t(p-1)\right]\\
 & \leq & \frac{1}{p+1}\prod_{k=1}^{(n-1)-t-1}\left(1-p^{-k}\right)\cdot(1-p^{t+1-n})=\frac{1}{p+1}\prod_{k=1}^{n-t-1}\left(1-p^{-k}\right)
\end{eqnarray*}
\end{proof}

We are now ready to prove Lemma~\ref{lem: low degree testing}.
\begin{proof}[Proof of Lemma~\ref{lem: low degree testing}]
Let $t= \left\lceil \frac{d}{p-1}\right\rceil$. We assume without
lost of generality that $d = t(p-1)$: Otherwise, let $a=t(p-1)-d$
and consider the function $f'(x_0,x_1,....,x_n) \triangleq
x_0^a\cdot  f(x_1,...,x_n)$. It is easily checked that
$\deg_p(f')\leq t(p-1)$. Also $\deg_p(f')=t(p-1)$ if and only if
$\deg_p(f) =d $.

 We will present an algorithm for the problem with
one sided error probability $1-\Omega\left(\frac{1}{p}\right)$
that runs in time $\poly(n)\cdot O(p^t)$. By repeating it $O(p)$
times, we can get down to error probability $1/100$ in running
time $\poly(n)\cdot O(p^{t+1})$ as required.

Consider the following algorithm. Choose a random $t$-dimensional affine subspace $V$. Accept if and only if the $\deg_p(f|_V)<t(p-1)$.
Assume first that $\deg_p(f)<t(p-1)$.
Then for any affine subspace $V$, $\deg_p(f|_V) \leq \deg_p(f)<t(p-1)$.
On the other hand, if $\deg_p(f|_V)= t(p-1)$, Corollary~\ref{cor: low deg tester: completeness} implies we will accept with probability at least $\Omega(\frac{1}{p})$.

We conclude by analyzing the running time.
Choosing $V$ can be done in $\poly(n)$-time.
For checking whether $\deg_p(f|_V) =t(p-1)$, Lemma~\ref{lem: not-so-low degree testing} gives running
$O(p^t)$ assuming black-box access to $f|_V$.
Given black-box access to $f$, we can compute
$f|_V(\ba)$ for $\ba\in \F^t$ in $\poly(n)$-time.
The claimed running time of $\poly(n)\cdot O(p^t)$ follows.
\end{proof}

\section{Testing if $P_G$ contains an $r$-monomial}\label{sec:rmon}

In this section we present a method for testing
whether the polynomial $P_G$, described in Section \ref{sec:path-to-poly}, contains an $r$-monomial.
 This is done using the low-degree tester from the previous section.

As stated in Lemma \ref{lem:path_to_pG}, this is precisely equivalent to whether $G$ contains an $r$-simple $k$-path.
Recall we viewed $P_G$ as a polynomial over a field of rational functions $\F(\by)$.
To obtain efficient algorithms, we first reduce the question
to checking whether a different polynomial defined over $\F$ rather than $\F(\by)$
contains an $r$-monomial.
It is important in the next Lemma that we are able to do this reduction for \emph{any} $p$, in particular a
 `small' one.

\begin{lemma}\label{lem:only_x_var}
Fix any integers $r,k$, with $r\leq k$. Let $p$ be any prime and
$t=\lceil \log_p {10k} \rceil $.  Let $G$ be a directed graph on
$n$ vertices. Given an adjacency matrix $A_G$ for $G$, we can
return in $\poly(n)$-time $\poly(n)$-size circuits computing
polynomials $f^1_G,\ldots,f^t_G:\F^n\to \F$ on inputs in $\F^n$
such that
\begin{itemize}
\item For $1\leq i \leq t$, $f^i_G$ is (either the zero polynomial or) homogenous of degree $k$.
\item If $G$ contains an $r$-simple $k$-path then with
probability at least $9/10$, for some $1\leq i\leq t$, $f^i_G$ contains an $r$-monomial.
\item If $G$ does not contain an $r$-simple $k$-path, for all $1\leq i \leq t$, $f^i_G$
does not contain an $r$-monomial.
\end{itemize}
\end{lemma}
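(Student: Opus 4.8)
The plan is to eliminate the auxiliary variables $\bfy$ from $P_G$ by a random substitution, in a way that preserves the presence of an $r$-monomial in $\bfx$ with good probability. Recall from Section~\ref{sec:path-to-poly} that $P_G(\bfx,\bfy) = \sum_{\rho} x_{i_1}\cdots x_{i_k}\, y_{1,i_1}\cdots y_{k,i_k}$, where the sum is over directed $k$-paths $\rho = i_1\to\cdots\to i_k$ in $G$, and that distinct paths give distinct monomials. Grouping by the $\bfx$-monomial, write $P_G = \sum_{M} M(\bfx)\cdot c_M(\bfy)$, where $M$ ranges over the multisets of $k$ vertices realizable by some $k$-path and $c_M(\bfy) = \sum_{\rho \text{ realizing } M} \prod_{j} y_{j,i_j}$ is a nonzero polynomial in $\bfy$ (it is a sum of distinct monomials, hence not identically zero). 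The graph $G$ has an $r$-simple $k$-path iff some $r$-monomial $M$ has $c_M \not\equiv 0$, which is automatic, so really: $G$ has an $r$-simple $k$-path iff some realizable $M$ is an $r$-monomial.

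The substitution I would use is: pick random $\beta_{j,i}\in\F$ and set $y_{j,i} \leftarrow \beta_{j,i}$. Then $P_G(\bfx,\bfb) = \sum_M c_M(\bfb)\, M(\bfx)$ is a polynomial over $\F$ in $\bfx$ alone, homogeneous of degree $k$ (or zero), computable in $\poly(n)$ time by the same matrix-product formula $\bfone B^{(k)}\cdots B^{(2)}(\bfx\centerdot\bfb)$ with the $y$'s replaced by the scalars $\beta_{j,i}$ — so it has a $\poly(n)$-size circuit. The issue is that an individual $c_M(\bfb)$ might vanish: since $c_M$ has degree $k$ in the $\bfy$-variables and $|\F| = p$ may be tiny (as small as $2$ or $3$), Schwartz–Zippel gives no nontrivial bound over $\F$ itself. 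This is the main obstacle, and it is resolved exactly as the statement hints: instead of working over $\F$, I evaluate the $\beta_{j,i}$ in the extension field $\FF_{p^t}$ with $t = \lceil\log_p 10k\rceil$, so that $|\FF_{p^t}| = p^t \geq 10k \geq 10\deg_{\bfy}(c_M)$; then for a fixed realizable $r$-monomial $M_0$, a uniformly random point of $\FF_{p^t}^{kn}$ makes $c_{M_0}(\bfb)\neq 0$ with probability at least $1 - k/p^t \geq 9/10$ by Schwartz–Zippel. One must still check that the substituted polynomial does not acquire an $r$-monomial it shouldn't: if $c_M(\bfb)=0$ then $M$ simply drops out, and if $M$ is not realizable it never appeared, so the set of $r$-monomials with nonzero coefficient after substitution is a subset of the realizable $r$-monomials — giving the third bullet with probability one.

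It remains to produce polynomials into $\F$ rather than $\FF_{p^t}$, which is where the list $f^1_G,\ldots,f^t_G$ comes from: fix an $\F$-basis $e_1,\ldots,e_t$ of $\FF_{p^t}$ and write the single $\FF_{p^t}$-valued polynomial $P_G(\bfx,\bfb)$ in coordinates as $P_G(\bfx,\bfb) = \sum_{i=1}^t e_i\, f^i_G(\bfx)$ with each $f^i_G\in\F[\bfx]$; equivalently, applying the $i$-th coordinate functional $\lambda_i:\FF_{p^t}\to\F$ (which is $\F$-linear) to the adjacency-matrix formula gives a $\poly(n)$-size circuit over $\F$ for $f^i_G$, and each $f^i_G$ inherits homogeneity of degree $k$ (or is zero) since $P_G(\bfx,\bfb)$ is. Now $M$ has a nonzero coefficient in some $f^i_G$ iff its coefficient $c_M(\bfb)\in\FF_{p^t}$ is nonzero, because a nonzero element of $\FF_{p^t}$ has at least one nonzero coordinate. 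So: if $G$ has an $r$-simple $k$-path, fix such a realizable $r$-monomial $M_0$; with probability $\geq 9/10$ over the choice of $\bfb\in\FF_{p^t}^{kn}$ we have $c_{M_0}(\bfb)\neq 0$, hence $M_0$ appears with nonzero coefficient in some $f^i_G$ — giving the second bullet. If $G$ has no $r$-simple $k$-path, no realizable $M$ is an $r$-monomial, so no $f^i_G$ contains an $r$-monomial, for every choice of $\bfb$ — giving the third bullet. The one routine point to verify is the running-time bookkeeping: arithmetic in $\FF_{p^t}$ costs $\poly(t,\log p) = \poly(\log k)$ per operation, and the matrix product involves $\poly(n)$ operations, so the whole construction is $\poly(n)$ time as claimed.
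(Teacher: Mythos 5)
Your proposal is correct and follows essentially the same route as the paper: substitute random values from the extension field $\FF_{p^t}$ for the $\bfy$-variables, apply Schwartz--Zippel to the (nonzero, degree-$k$) coefficient of a fixed realizable $r$-monomial, and then split the resulting $\FF_{p^t}$-valued polynomial into $t$ polynomials over $\F$ via linearly independent $\F$-linear functionals (your coordinate functionals are exactly the paper's ``independent $\F$-linear maps'' $T_1,\ldots,T_t$). The only differences are cosmetic: you spell out why each $c_M(\bfy)$ is nonzero and why no spurious $r$-monomial can appear, details the paper leaves implicit.
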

\begin{proof}
Note that the discussion in Section \ref{sec:path-to-poly} implies
we can compute $P_G$ in  $\poly(n)$-time over inputs in
$\FF_{p}^{2n}$. We choose random $\bb\in \FF_{p^t}^n$ and let
\[f_G(\bx)\triangleq P_G(\bx,\bb).\]
Suppose $P_G$, as a polynomial over $\FF(\by)$, contains an $r$-monomial $M'(\bx)$.
The coefficient $c_{M'}(\bfy)$ of $M'$ in $P_G$ is a nonzero polynomial of degree $k$.
So, by the Schwartz-Zippel Lemma, $c_{M'}(\bb)=0$ with probability at most $k/p^t\leq 1/10$.
In the event that $c_{M'}(\bb)\neq 0$, $f_G(\bx)$ is a homogenous polynomial of degree $k$ in $\FF_{p^t}[\bx]$
containing an $r$-monomial. Let us assume from now on, we chose a $\bb$ such that
indeed $a_{M'}\triangleq c_{M'}(\bb)\neq 0$.
We now discuss how to end up with polynomials having coefficients in $\F$ rather than $\FF_{p^t}$.

Let $T_1,\ldots, T_t:\FF_{p^t}\to \F$ be independent $\F$-linear maps.
Suppose $f_G = \sum_M a_M\cdot M(\bx)$.
For $1\leq i \leq t$, define a polynomial $f_G^i\in \F[\bx]$ by
\[f_G^i (\bfx) \triangleq \sum_M T_i(a_M)\cdot M(\bfx).\]
Note that for all $1\leq i \leq t$, $f_G^i$ is the zero polynomial
or homogenous of degree $k$. As $a_{M'}\neq 0$, for some $i$,
$T_i(a_{M'})\neq 0$. For this $i$, $f_G^i$ is homogenous of degree
$k$ and contains an $r$-monomial, specifically, the $r$-monomial
$a_{M'}\cdot M'(\bx)$. We claim that for all $1\leq i \leq t$,
$f_G^i$ can be computed by a $poly(n)$-size circuit on inputs
$\ba\in \F^n$. This is because $f_G$ and $T_i$ are efficiently
computable, and because for $\ba \in \F^n$,

\[T_i(f_G(\ba)) = T_i\left(\sum_M a_M\cdot M(\ba)\right)
= \sum_M T_i(a_M)\cdot M(\ba) = f_G^i(\ba),\] where the second
equality is due to the $\F$-linearity of $T_i$.

\end{proof}

The above lemma implies

\begin{corollary}\label{cor:enough_over_Fp}
Fix any prime $p$. Suppose that given black-box access to a
polynomial $g\in \F[\bx]$ that is homogenous of degree $k$, we can
determine in time $\poly(n)\cdot S$ if it contains an
$r$-monomial. Then we can also determine in time $\poly(n)\cdot S$
whether $P_G$ as a polynomial over $\F(\by)$ contains an $r$
monomial.
\end{corollary}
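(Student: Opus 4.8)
The plan is to observe that this corollary is just a repackaging of Lemma~\ref{lem:only_x_var} together with Lemma~\ref{lem:path_to_pG}, so there is no genuinely hard step. First I would recall from Lemma~\ref{lem:path_to_pG} that $P_G$, viewed as a polynomial over $\F(\by)$, contains an $r$-monomial in $\bx$ if and only if $G$ contains an $r$-simple $k$-path. Hence it suffices to exhibit a randomized procedure, running in time $\poly(n)\cdot S$ and with one-sided error, that decides whether $G$ has an $r$-simple $k$-path; its answer is by definition also the answer to whether $P_G$ over $\F(\by)$ contains an $r$-monomial.

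Next I would invoke Lemma~\ref{lem:only_x_var} with the given prime $p$. In $\poly(n)$ time it produces $\poly(n)$-size circuits computing polynomials $f^1_G,\ldots,f^t_G:\F^n\to\F$ with $t=\lceil\log_p 10k\rceil$, each either identically zero or homogeneous of degree $k$, such that: (i) if $G$ has an $r$-simple $k$-path then with probability at least $9/10$ some $f^i_G$ contains an $r$-monomial; and (ii) if $G$ has no such path then no $f^i_G$ contains an $r$-monomial. Since each $f^i_G$ is given by a $\poly(n)$-size circuit, we have black-box access to it at $\poly(n)$ cost per evaluation, so each $f^i_G$ is a legitimate input to the hypothesized subroutine. (A $f^i_G$ that is identically zero is harmless: it has no monomials at all, hence no $r$-monomial, consistent with the one-sided subroutine simply answering ``no''.) The algorithm is then: build these circuits, run the hypothesized subroutine on each of $f^1_G,\ldots,f^t_G$, and accept if and only if at least one invocation reports an $r$-monomial.

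For correctness: if $G$ has no $r$-simple $k$-path, then by (ii) and the one-sidedness of the subroutine every invocation rejects, so there is never a false positive. If $G$ does have one, then by (i) with probability at least $9/10$ some $f^i_G$ contains an $r$-monomial, and the corresponding invocation accepts with at least the subroutine's success probability; amplifying the subroutine beforehand by a constant number of repetitions pushes the overall acceptance probability above, say, $2/3$, so the combined algorithm has constant one-sided error. For the running time, building the circuits costs $\poly(n)$, and we make $t=O(\log k)$ invocations each costing $\poly(n)\cdot S$; absorbing the factor $t$ into the polynomial term gives total time $\poly(n)\cdot S$. The only points that need a moment's care — that a small circuit supplies exactly the ``black-box access'' the hypothesis asks for, that the degenerate zero-polynomial output is handled, and that the $1/10$ failure probability of Lemma~\ref{lem:only_x_var} composes cleanly with the subroutine's own error — are all routine, and I do not expect any real obstacle.
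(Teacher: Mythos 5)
Your proposal is correct and follows exactly the route the paper intends: the paper gives no explicit proof, simply stating that the corollary follows from Lemma~\ref{lem:only_x_var}, and your argument (build the circuits for $f^1_G,\ldots,f^t_G$, run the hypothesized test on each, accept iff one accepts, and note the error and time bounds compose) is precisely the omitted routine verification.
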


{Our reduction to low-degree testing is based on the following
simple observation that, for the right $p$ and for homogenous
polynomials, containing an $r$-monomial is equivalent to having a
certain $\deg_p$-degree.}
\begin{lemma}\label{lem:degiffr-mon}
Suppose $g\in\F[\bx]$ is a homogenous polynomial of degree $k$.
Suppose $r=p-1$. Then $\deg_p(g)=  k $ if and only if $g$ contains
an $r$-monomial.
\end{lemma}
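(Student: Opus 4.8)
The plan is to prove both directions by working with the reduced polynomial $g' \triangleq g \bmod x_1^p - x_1, \ldots, x_n^p - x_n$, so that $\deg_p(g) = \deg(g')$ by definition. The key point is to track what happens to a monomial $M(\bx) = x_1^{a_1}\cdots x_n^{a_n}$ of total degree $k$ under this reduction, given that $r = p-1$.

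\emph{($\Leftarrow$) If $g$ contains an $r$-monomial, then $\deg_p(g) = k$.} Suppose $M = x_1^{a_1}\cdots x_n^{a_n}$ is an $r$-monomial appearing in $g$ with nonzero coefficient; so each $a_i \leq r = p-1$. Then already $M$ is reduced (no exponent reaches $p$), so $M$ survives unchanged into $g'$. The only subtlety is cancellation: could another monomial $M'$ of $g$ reduce to the same monomial $M$ in $g'$? If $M'= x_1^{b_1}\cdots x_n^{b_n}$ reduces to $M$, then for each $i$ either $b_i = a_i$, or $b_i \geq p$ and $b_i \equiv a_i \pmod{p-1}$ with $a_i \ne 0$ (using the reduction rule $x_i^t \mapsto x_i^{((t-1)\bmod(p-1))+1}$). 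Since $g$ is homogenous of degree $k$, both $M$ and $M'$ have total degree $k$; but if $M' \ne M$ then some $b_i > a_i \ge 1$ would force $b_i \ge a_i + (p-1) \ge p$, strictly increasing that coordinate, and since no other $b_j$ can be negative this makes $\deg(M') > \deg(M) = k$, contradicting homogeneity of $g$. Hence $M' = M$: no monomial of $g$ other than $M$ itself maps onto $M$, so the coefficient of $M$ in $g'$ equals its (nonzero) coefficient in $g$. Therefore $g'$ contains the degree-$k$ monomial $M$, and since reduction never raises total degree (each $x_i^t \mapsto x_i^{\leq t}$), we get $\deg_p(g) = \deg(g') = k$.

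\emph{($\Rightarrow$) If $\deg_p(g) = k$, then $g$ contains an $r$-monomial.} Contrapositively, suppose $g$ contains no $r$-monomial, i.e.\ every monomial $M$ of $g$ has some variable with exponent $\geq p$. Fix such an $M = x_1^{a_1}\cdots x_n^{a_n}$ with $\sum a_i = k$ and say $a_j \geq p$. Under reduction, $x_j^{a_j}$ becomes $x_j^{e}$ with $e = ((a_j - 1)\bmod(p-1)) + 1 \leq p-1 < p \leq a_j$, so the total degree of $M$ strictly drops by at least $a_j - e \geq 1$ (the other exponents can only decrease or stay the same). Thus \emph{every} monomial of $g$ reduces to a monomial of total degree $< k$, and so $\deg(g') < k$, i.e.\ $\deg_p(g) < k$.

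The main obstacle is the cancellation issue in the $(\Leftarrow)$ direction: one must rule out that several non-$r$-monomials of $g$ collapse onto the same reduced monomial and cancel, possibly killing an $r$-monomial's contribution. Homogeneity of $g$ of degree $k$ is exactly what makes this work, since reducing any non-$r$-monomial strictly lowers total degree while an $r$-monomial keeps degree $k$, so the degree-$k$ part of $g'$ is precisely the sum of the (unchanged) $r$-monomials of $g$ and nothing can cancel it. I would make sure to state this cleanly, perhaps by observing that the degree-$k$ homogeneous component of $g'$ equals the sum of all $r$-monomials of $g$.
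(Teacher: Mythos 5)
Your proof is correct and follows essentially the same route as the paper's: an $r$-monomial survives reduction modulo $x_i^p-x_i$ at degree $k$, while any non-$r$-monomial strictly drops in degree, and homogeneity ties the two directions together. The only difference is that you explicitly rule out cancellation among monomials collapsing under reduction — a point the paper's proof leaves implicit — and your observation that the degree-$k$ part of the reduced polynomial is exactly the sum of the $r$-monomials of $g$ is a clean way to justify it.
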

\begin{proof}
If $g$ contains an $r$-monomial $M$ then, as $r<p$, $\deg_p(M)=k$,
which implies that $\deg_p(g)= k$. If $g$ does not contain an
$r$-monomial, then for every monomial $M$ in $g$ there is an $i\in
[n]$ such that the degree of $x_i$ in $M$ is at least  $r+1=p$. So
replacing $x_i^p$ by $x_i$ will reduce the degree of $M$ and
therefore $\deg_p(M)<k$. Since this happens for all monomials of
$g$, $\deg_p(g)< k$.
\end{proof}

We introduce another element on notation that will be convenient in the rest of this section.
\begin{dfn}
Fix integers $n,d$ and prime $p$. Let $f\in \F[\bx]$ be an
$n$-variate polynomial of degree at most $d$. We define
$LDT(f,n,d,p)$ to be $1$ if $\deg_p(f) =  d$, and $0$ otherwise.
\end{dfn}

Before proceeding, we note that the results of Section
\ref{sec:ldtest} imply that given $n,d,p$ and black-box access to
$f$, $LDT(f,n,d,p)$ can be computed in time $\poly(n)\cdot
O(p^{\left\lceil {d}/{(p-1)}\right\rceil +1})$. In particular, if
given $\ba\in \F^n$, we can compute $f(\ba)$ in $\poly(n)$-time,
then we can compute $LDT(f,n,d,p)$ in time
$\poly(n)\cdot O(p^{\left\lceil {d}/({p-1})\right\rceil +1})$.\\
The following lemma is an easy corollary of Lemma \ref{lem:degiffr-mon}.
\begin{lemma}
Fix integers $r,k$ with $r\leq k$. Suppose $p=r+1$ is prime. Let
$g\in \F[\bx]$ be homogenous of degree $k$ and computable in
$\poly(n)$-time. There is a randomized algorithm running in time
$$poly(n)\cdot O((r+1)^{\left\lceil \frac{k}{r}\right\rceil +1})$$
determining whether $g$ contains an $r$-monomial.
\end{lemma}

\begin{proof}

The algorithm simply returns $LDT(g,n,d=k,p=r+1)$.
The running time follows from the discussion above.
The correctness follows from Lemma \ref{lem:degiffr-mon}.
\end{proof}

We wish to have a similar result when $r+1$ is not a prime.

\begin{lemma}\label{lem:r,k,p}
Fix integers $r,k$ with $r\leq k$. Let $p$ be the smallest prime
such that $\frac{p-1}{r} \in \mathbb{Z}$. Let $g\in \F[\bx]$ be
homogenous of degree $k$ and computable by a $\poly(n)$-size
circuit. There is a randomized algorithm running in time
$poly(n)\cdot O(p^{\left\lceil \frac{k}{r}\right\rceil +1})$
determining whether $g$ contains an $r$-monomial.
\end{lemma}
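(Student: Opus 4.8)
The plan is to reduce to the prime case of Lemma~\ref{lem:degiffr-mon} by a monomial substitution. Set $s\triangleq(p-1)/r$; by the choice of $p$ this is a positive integer, and since $p\geq r+1$ we have $1\leq s<p$. (Such a prime exists by Dirichlet's theorem applied to the progression $1,1+r,1+2r,\ldots$; and when $r+1$ is itself prime we recover $p=r+1$, $s=1$, so the construction below collapses to the previous lemma.) First I would form the polynomial $\tilde g(\bx)\triangleq g(x_1^{s},\ldots,x_n^{s})$. Since $g$ is homogeneous of degree $k$, $\tilde g$ is homogeneous of degree $sk$, so $\deg_p(\tilde g)\leq sk$; and as a function on $\F^{n}$ it satisfies $\tilde g(\ba)=g(a_1^{s},\ldots,a_n^{s})$, which can be evaluated in $\poly(n)$-time from the given $\poly(n)$-size circuit for $g$ (each $a_i^{s}$ costs $O(\log s)=O(\log p)$ field multiplications). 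The algorithm then simply outputs $LDT(\tilde g,n,sk,p)$, inheriting its one-sided error from Lemma~\ref{lem: low degree testing}.

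The heart of the argument is the claim that $\deg_p(\tilde g)=sk$ if and only if $g$ contains an $r$-monomial. To prove it I would track the effect of reducing each $x_i^{s a_i}$ modulo $x_i^{p}-x_i$ on the image $\prod_i x_i^{s a_i}$ of a monomial $M=\prod_i x_i^{a_i}$ of $g$. If $M$ is an $r$-monomial, then every exponent satisfies $s a_i\leq sr=p-1$, so no reduction takes place and the image is $\prod_i x_i^{s a_i}$, of total degree exactly $sk$; moreover distinct $r$-monomials of $g$ produce distinct images, because $(a_i)_i\mapsto(s a_i)_i$ is injective. If $M$ is not an $r$-monomial, some $a_i\geq r+1$, hence $s a_i\geq s(r+1)>sr=p-1$, i.e. $s a_i\geq p$, so reducing $x_i^{s a_i}$ strictly drops that exponent and the image of $M$ has total degree $<sk$. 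Therefore the degree-$sk$ homogeneous component of the reduction of $\tilde g$ equals $\sum_{M}c_M\prod_i x_i^{s a_i}$, the sum ranging over the $r$-monomials $M=\prod_i x_i^{a_i}$ of $g$ with their nonzero coefficients $c_M$; these images are pairwise distinct monomials, so the component is nonzero precisely when $g$ has at least one $r$-monomial. This establishes the claim, hence correctness: if $g$ has an $r$-monomial then $\deg_p(\tilde g)=sk$ and $LDT$ accepts with high probability, and otherwise $\deg_p(\tilde g)<sk$ and $LDT$ rejects with certainty.

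For the running time, the circuit for $g$ gives $\poly(n)$-time black-box access to $\tilde g$ together with the promise $\deg_p(\tilde g)\leq sk$, so by the remark following the definition of $LDT$ (which rests on Lemma~\ref{lem: low degree testing}) computing $LDT(\tilde g,n,sk,p)$ costs $\poly(n)\cdot O(p^{\lceil sk/(p-1)\rceil+1})$; since $sk/(p-1)=sk/(sr)=k/r$, this is $\poly(n)\cdot O(p^{\lceil k/r\rceil+1})$, as required. The step I expect to be the main obstacle is the direction of the claim asserting that $g$ having an $r$-monomial forces $\deg_p(\tilde g)=sk$, i.e. ruling out cancellation of an $r$-monomial's image inside $\tilde g$; the total-degree bookkeeping above handles it, since every non-$r$-monomial's image lands strictly below degree $sk$ and so cannot touch the degree-$sk$ component, where the surviving $r$-monomial images sit as a linearly independent family. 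Choosing the exponent $s=(p-1)/r$ is exactly what makes the admissible exponents $\{0,1,\ldots,r\}$ spread out over $\{0,s,2s,\ldots,p-1\}$ without wrap-around while every larger exponent wraps, and this is the one nontrivial idea of the proof.
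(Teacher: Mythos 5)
Your proof is correct and follows the same route as the paper: the substitution $h(\bx)=g(x_1^{l},\ldots,x_n^{l})$ with $l=(p-1)/r$, a single call to $LDT(h,n,kl,p)$, and the observation that $kl/(p-1)=k/r$. The only difference is cosmetic: where the paper notes that $h$ contains an $(rl)$-monomial iff $g$ contains an $r$-monomial and then cites Lemma~\ref{lem:degiffr-mon} (with $rl+1=p$), you unfold that lemma's argument into an explicit reduction modulo $x_i^{p}-x_i$, which is a valid and slightly more self-contained presentation of the same idea.
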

\begin{proof}
Denote $l\triangleq \frac{p-1}{r}$ and define
\[h(x_1,x_2,\ldots ,x_n):= g(x^l_1,x^l_2,\ldots ,x^l_n).\]
The algorithm returns $LDT(h,n,d=k\cdot l,p)$.

Note that $h$ is homogenous of degree $k\cdot l$.
Note also that $h$ contains an $r\cdot l$-monomial if and only if $g$ contains
an $r$-monomial.
As $r\cdot l + 1=p$ correctness now follows from Lemma \ref{lem:degiffr-mon}.

\end{proof}

The best known bound for the smallest prime number $p$ that
satisfies $r|p-1$ is $r^{5.5}$ due to Heath-Brown \cite{R96}.
This gives a randomized algorithm running in time $$poly(n)\cdot
O(r^{\frac{5.5k}{r} +O(1)}).$$ Schinzel, Sierpinski, and Kanold
have conjectured the value to be 2 \cite{R96}. In the following
Theorem we give a better bound. We first give the following

\begin{lemma}\label{lem:r,k,p,l}
Fix integers $r,k$ with $r\leq k$. Let $p$ be the smallest prime
such that there is an $ l \in \mathbb{Z}$ for which $r\cdot l \leq
p-1 $ and $(r+1)\cdot l > p-1$. Let $g\in \F[\bx]$ be homogenous
of degree $k$ and computable by a $\poly(n)$-size circuit. There
is a randomized algorithm running in time
$$poly(n)\cdot O\left(p^{\left\lceil \frac{l\cdot k}{p-1}\right\rceil +1}\right)$$
determining whether $g$ contains an $r$-monomial.
\end{lemma}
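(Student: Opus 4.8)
The plan is to follow the proof of Lemma~\ref{lem:r,k,p} almost verbatim, with the exact divisibility condition $r\mid p-1$ used there replaced by the pair of inequalities $r\cdot l\le p-1<(r+1)\cdot l$ in the hypothesis. Fix the prime $p$ together with an integer $l$ witnessing it, and form the substituted polynomial
\[ h(x_1,\ldots,x_n)\triangleq g(x_1^l,x_2^l,\ldots,x_n^l). \]
Then $h$ is either the zero polynomial or homogeneous of degree $l\cdot k$, and, given a $\poly(n)$-size circuit for $g$, we have black-box access to $h$ in $\poly(n)$-time: on input $\ba\in\F^n$ raise each coordinate to the $l$-th power by repeated squaring (an additive $O(n\log l)$ overhead) and feed the result to the circuit for $g$. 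The algorithm then returns $LDT(h,n,d=l\cdot k,p)$; this is legitimate because $\deg_p(h)\le\deg(h)=l\cdot k$.

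The core of the argument is the claim that $h$ contains a $(p-1)$-monomial if and only if $g$ contains an $r$-monomial. First I would note that the map $\vec a\mapsto l\cdot\vec a$ on exponent vectors is injective, so no monomials merge in passing from $g$ to $h$: the coefficient in $h$ of $\prod_i x_i^{b_i}$ equals the coefficient in $g$ of $\prod_i x_i^{b_i/l}$ when $l\mid b_i$ for every $i$, and is $0$ otherwise. Consequently $h$ contains a $(p-1)$-monomial iff $g$ contains a monomial $\prod_i x_i^{a_i}$ with $l\cdot a_i\le p-1$ for all $i$. Since the $a_i$ are integers and the inequalities $r\cdot l\le p-1<(r+1)\cdot l$ say exactly that $r=\lfloor (p-1)/l\rfloor$, the condition $l\cdot a_i\le p-1$ is equivalent to $a_i\le r$; hence $h$ contains a $(p-1)$-monomial iff $g$ contains an $r$-monomial.

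It remains to assemble the pieces. Since $h$ is homogeneous of degree $l\cdot k$, Lemma~\ref{lem:degiffr-mon}, applied with the polynomial $h$, total degree $l\cdot k$, and $p-1$ playing the role of ``$r$'', gives $\deg_p(h)=l\cdot k$ iff $h$ contains a $(p-1)$-monomial, which by the previous paragraph holds iff $g$ contains an $r$-monomial. Thus $LDT(h,n,l\cdot k,p)$ is precisely the desired indicator, and its correctness follows. For the running time, the remark following the definition of $LDT$ in Section~\ref{sec:rmon} shows that, given the $\poly(n)$-time black-box access to $h$ established above, $LDT(h,n,l\cdot k,p)$ can be computed in time $\poly(n)\cdot O\!\left(p^{\lceil l\cdot k/(p-1)\rceil+1}\right)$, which is the claimed bound. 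The only steps needing care are the floor identity $r=\lfloor (p-1)/l\rfloor$ that drives the key equivalence and the (routine) check that $x_i\mapsto x_i^l$ preserves $\poly(n)$-time evaluability; I do not expect a genuine obstacle, as the statement is a mild relaxation of Lemma~\ref{lem:r,k,p}.
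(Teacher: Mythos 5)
Your proposal is correct and follows essentially the same route as the paper: substitute $x_i\mapsto x_i^l$ to form $h=g(x_1^l,\ldots,x_n^l)$, observe that $h$ contains a $(p-1)$-monomial iff $g$ contains an $r$-monomial (the paper phrases this via ``$h$ contains an $(r\cdot l)$-monomial iff $g$ contains an $r$-monomial'' together with $r\cdot l\le p-1<(r+1)\cdot l$, which is the same as your floor identity), and conclude by Lemma~\ref{lem:degiffr-mon} and a call to $LDT(h,n,k\cdot l,p)$. Your additional remarks on the injectivity of the exponent map and on black-box evaluability of $h$ are correct elaborations of steps the paper leaves implicit.
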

\begin{proof}
As in the proof of Lemma \ref{lem:r,k,p}, we define
$h(x_1,x_2,...,x_n)\triangleq g(x^l_1,x^l_2,...,x^l_n)$. The
algorithm returns $LDT(h,n,d=k\cdot l,p)$. As in the proof of
Lemma $\ref{lem:r,k,p}$, $h$ is homogenous of degree $k\cdot l$
and contains an $(r\cdot l)$-monomial if and only if $g$ contains
an $r$-monomial. Furthermore, as $r\cdot l \leq p-1$ and
$(r+1)\cdot l \geq p$, $h$ contains a $(p-1)$-monomial if and only
if $g$ contains an $r$-monomial. Correctness now follows from
Lemma \ref{lem:degiffr-mon}.

\end{proof}

The main result of this section contains two results. The first is
unconditional. The second is true if Cramer's conjecture is
true. Cramer's conjecture states that the gap between two
consecutive primes $p_{n+1}-p_n=O(\log^2 p_n), \cite{C36}$.

\begin{theorem}
(Unconditional Result) Fix any integers $r,k$ with $2\leq r\leq
k$. Let $g\in \F[\bx]$ be homogenous of degree $k$ and computable
by a $\poly(n)$-size circuit. There is a randomized algorithm
running in time
 $$\poly(n)\cdot O\left(r^{\frac{2k}{r} +O(1)}\right)$$
 determining whether $g$ contains an $r$-monomial.

(Conditional Result) If Cramer's Conjecture is true then the time
complexity of the algorithm is
$$\poly(n)\cdot O\left( r^{\frac{k}{r}+o\left(\frac{k}{r}\right)}\right).$$
\end{theorem}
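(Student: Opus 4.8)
The plan is to deduce both parts from Lemma~\ref{lem:r,k,p,l} by making a good choice of the pair $(p,l)$ and then bounding the exponent $\lceil l\cdot k/(p-1)\rceil$. The key observation is that whenever we have a prime $p$ and an integer $l$ with $r\cdot l\leq p-1 < (r+1)\cdot l$, Lemma~\ref{lem:r,k,p,l} gives an algorithm running in time $\poly(n)\cdot O(p^{\lceil l\cdot k/(p-1)\rceil+1})$, and the condition $r\cdot l\leq p-1$ means $l\cdot k/(p-1)\leq k/r$, while $p-1 < (r+1)\cdot l$ means $l > (p-1)/(r+1)$, so the exponent base $p$ is controlled in terms of $l$. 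Concretely, once we fix $l$, the relevant window for $p$ is the interval $[r\cdot l, (r+1)\cdot l)$, equivalently $p\in(rl, (r+1)l]$, which has length $l$; so we need a prime in an interval of length $l$ lying around $rl$.

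First I would prove the unconditional result. Take $l = r$ (or thereabouts). Then we need a prime $p$ in the range $(r^2, r^2+r]$; by Bertrand's postulate (or any elementary prime-gap bound sufficient here — in fact we just need \emph{some} prime with $r\cdot l\leq p-1<(r+1)l$ for \emph{some} admissible $l$, and we have freedom to increase $l$) there is certainly a prime $p$ with $p\leq 2r^2$ satisfying the required inequalities: choose $l$ to be any integer with $rl\leq p-1<(r+1)l$, i.e. $l\in(\tfrac{p-1}{r+1},\tfrac{p-1}{r}]$, an interval of length $\tfrac{p-1}{r(r+1)}$ which contains an integer as soon as $p-1\geq r(r+1)$, and for smaller $p$ one checks directly that a valid $l$ exists or simply enlarges. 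With $p = O(r^2)$ we get running time $\poly(n)\cdot O\big(p^{\lceil l k/(p-1)\rceil+1}\big) = \poly(n)\cdot O\big((r^2)^{k/r + O(1)}\big) = \poly(n)\cdot O\big(r^{2k/r+O(1)}\big)$, using $lk/(p-1)\leq k/r$ from $rl\leq p-1$. This is exactly the claimed unconditional bound.

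For the conditional result I would push $l$ much larger. Fix a small $\epsilon>0$ and look for a prime $p$ near $r^{1+\epsilon}$, taking $l\approx r^{\epsilon}$ so that the window $(rl,(r+1)l]$ has length $l\approx r^{\epsilon}$. Cramer's Conjecture says consecutive primes differ by $O(\log^2 p) = O(r^{o(1)})$, which for any fixed $\epsilon>0$ is $\ll r^{\epsilon}$ once $r$ is large; hence for all large $r$ there is a prime in that window. Then $p = r^{1+o(1)}$ (taking $\epsilon\to 0$ slowly as $r\to\infty$), and again $lk/(p-1)\leq k/r$, so the running time is $\poly(n)\cdot O\big(p^{k/r+O(1)}\big) = \poly(n)\cdot O\big(r^{(1+o(1))k/r + O(1)}\big) = \poly(n)\cdot O\big(r^{k/r + o(k/r)}\big)$, absorbing the additive $O(1)$ into the $o(k/r)$ term since $k\geq r$.

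The main obstacle is getting the exponent arithmetic clean: one must carefully verify that the ceiling $\lceil l\cdot k/(p-1)\rceil + 1$ really is $k/r + O(1)$ (unconditional) resp. $k/r + o(k/r)$ (conditional), which requires that $l\cdot k/(p-1)$ not exceed $k/r$ — this is exactly the inequality $r\cdot l\leq p-1$ built into Lemma~\ref{lem:r,k,p,l}, so it is automatic — and that the `$+1$' and the base-change from $p$ to $r$ contribute only lower-order terms, which uses $p = r^{O(1)}$ resp. $p = r^{1+o(1)}$ together with $k\geq r$. The second genuine subtlety is confirming that the required $l$ is always an integer in the stated window; for the unconditional case this needs $p-1\geq r(r+1)$, so one should either restrict to $r$ large and handle small $r$ by brute force (constant time), or note that Lemma~\ref{lem:r,k,p} already covers the case $l=1$ with $p = O(r^{5.5})$ which is swallowed by the $O(1)$ in the exponent for any bounded range of $r$. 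I expect these to be routine once set up, so the real content is just the choice of $(p,l)$ and invoking the prime-gap bound — unconditionally Bertrand/Chebyshev, conditionally Cramer.
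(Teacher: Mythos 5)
Your proposal matches the paper's proof essentially exactly: both parts are deduced from Lemma~\ref{lem:r,k,p,l} by choosing a prime $p=\Theta(r^2)$ via Bertrand's postulate (unconditional case) or $p=r^{1+o(1)}$ via Cramer's conjecture (conditional case), together with $l\approx (p-1)/r$ so that $r\cdot l\leq p-1<(r+1)\cdot l$ and hence $l k/(p-1)\leq k/r$. The only cosmetic difference is that the paper fixes $l=\lfloor (p-1)/r\rfloor$ explicitly (and takes $l=\Theta(\log^2 r)$ with $p=\Theta(r\log^2 r)$ in the conditional case) rather than arguing that the interval $\left(\frac{p-1}{r+1},\frac{p-1}{r}\right]$ contains an integer, which is the same calculation.
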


\begin{proof}
We will find $p$ and $l$ as required in Lemma \ref{lem:r,k,p,l}.
Fix a prime $p$ such that $r^2 + r +1 <p < 2r^2 + 2r\leq 3r^2 $ .
(This can be done as for any positive integer $t>3$, there is
always a prime between $t$ and $2t$.)

 Define $l\triangleq \lfloor \frac{p-1}{r} \rfloor $.
 We have
 \begin{eqnarray*}
 r\cdot l = r\cdot \lfloor \frac{p-1}{r}\rfloor \leq p-1
 \end{eqnarray*}
 \begin{eqnarray*}(r+1)\cdot l &\geq& (r+1)\cdot (\frac{p-1}{r} -1)\\
 &=&(p-1)+\frac{p-1}{r}-r-1 >(p-1)
 \end{eqnarray*}
The first claim now follows from Lemma \ref{lem:r,k,p,l} and Corollary
\ref{cor:enough_over_Fp}.

If Cramer's conjecture is true then there is a constant $c$ such
that for every integer $x$ there is a prime number in
$[x,x+c\log^2(x)]$ . Then there is a prime number $p$ in the
interval $[2cr\log^2 r, 2cr\log^2 r+c\log^2(2cr\log^2 r)]$ and we
can choose $l=2c\log^2 r$. Then the time complexity will be
$$poly(n)\cdot O\left( r^{\frac{k}{r}+o\left(\frac{k}{r}\right)}\right).$$
\end{proof}

\ignore{\begin{center}
\begin{tabular}{|l|l|l|l|}
Reference & Comment & $y=O(\cdot)$ & Result\\
\hline \hline \cite{C36} & Cramer's conjecture & $\log^2x$ & \\
\hline \cite{C36} & Riemann hypothesis & $\sqrt{x}\log x$& \\
\hline \cite{BHP01} & Unconditionally  & $x^{0.525}$ & \\
\hline
\end{tabular}
\end{center}

N. A. Carella and independently K. E. Lumbard claim that they
prove $y=\sqrt{x}\cdot poly(\log x)$ without any conditions in
unreviewed papers.}

The following table summarizes the result for $r\le 11$. See
Lemma~\ref{lem:r,k,p,l}.

\begin{center}
\begin{tabular}{|c|c|l|}
$r$ & Result & Field and $l$\\
\hline $1$ & $2^k$ \cite{W09} & $\FF_2$ , $l=1$\\
\hline $2$ & $1.73^k$
& $\FF_3$ , $l=1$  \\
\hline $3$ & $1.912^k$
& $\FF_7$ , $l=2$  \\
\hline $4$ & $1.495^k$
& $\FF_5$ , $l=1$  \\
\hline $5$ & $1.615^k$
& $\FF_{11}$ , $l=2$  \\
\hline $6$ & $1.383^k$
& $\FF_{7}$ , $l=1$  \\
\hline $7$ & $1.533^k$
& $\FF_{23}$ , $l=3$  \\
\hline $8$ & $1.424^k$
& $\FF_{17}$ , $l=2$ \\
\hline $9$ & $1.387^k$
& $\FF_{19}$ , $l=2$ \\
\hline $10$ & $1.27^k$
& $\FF_{11}$ , $l=1$\\
\hline $11$ & $1.329^k$
& $\FF_{23}$ , $l=2$\\
\hline
\end{tabular}
\end{center}

\end{document}